\theoremstyle{definition}\newtheorem{dfn}{Definition}[section]
\theoremstyle{plain}
\theoremstyle{plain}\newtheorem{thm}{Theorem}
\theoremstyle{plain}\newtheorem{lem}{Lemma}
\theoremstyle{plain}\newtheorem{claim}{Claim}
\theoremstyle{plain}\newtheorem{cor}{Corollary}
\theoremstyle{plain}\setcounter{cas}{-1}
\theoremstyle{plain}
\newcommand\hn{\textsc{HN}}
\newcommand\ruhn{\textsc{RUHN}}
\newcommand\uhn{\textsc{UHN}}
\newcommand\tc{\textsc{TC}}
\newcommand\utc{\textsc{UTC}}
\newcommand\ndp{\textsc{NDP}}
\newcommand\maf{\textsc{MAF}}
\newlang{\rHYB}{HYB}
\newlang{\uHYB}{HYB^\ast}
\newif\ifcomment\commentfalse
\def\commentON{\commenttrue}
\long\outer\def\bc#1\ec{{\ifcomment \sloppy  \textcolor{red}{
{ {#1}} }\fi }}
\long\outer\def\BC#1\EC{{\ifcomment \sloppy \par \textcolor{blue}{\#  \dotfill
{\textsc{#1}} \dotfill \#} \par \fi }}
\title{\textbf{On unrooted and root-uncertain variants of several well-known phylogenetic network problems}}
\date{}
\author[1]{Leo van Iersel \thanks{l.j.j.v.iersel@gmail.com}}
\author[2]{Steven Kelk \thanks{steven.kelk@maastrichtuniversity.nl}}
\author[2]{Georgios Stamoulis \thanks{georgios.stamoulis@maastrichtuniversity.nl}}
\author[3,4]{Leen Stougie \thanks{l.stougie@vu.nl}}
\author[2]{Olivier Boes \thanks{olivier.boes@protonmail.com}}
\affil[1]{Delft Institute of Applied Mathematics, Delft University of Technology, Delft, The Netherlands}
\affil[2]{Department of Data Science and Knowledge Engineering (DKE), Maastricht University, Maastricht, The Netherlands}
\affil[3]{CWI and Operations Research, Dept. of Economics and Business Administration, Vrije Universiteit, Amsterdam, The Netherlands}
\affil[4]{INRIA-team ERABLE, France}
\begin{document}
\maketitle

\begin{abstract}
\noindent The hybridization number problem  requires us to embed a set of binary rooted phylogenetic trees into a binary rooted phylogenetic network such that the number of
nodes with indegree two is minimized. However, from a biological point of view accurately inferring the root location in a phylogenetic tree is notoriously difficult and poor
root placement can artificially inflate the hybridization number.  To this end we study a number of relaxed variants of this problem. We start by showing that the fundamental
problem of determining whether an \emph{unrooted} phylogenetic network displays (i.e. embeds) an \emph{unrooted} phylogenetic tree, is NP-hard. On the positive side we show
that this problem is FPT in reticulation number. In the rooted case the corresponding FPT result is trivial, but here we require more subtle argumentation. Next we show that
the hybridization number problem for unrooted networks (when given two unrooted trees) is equivalent to the problem of computing the Tree Bisection and Reconnect (TBR)
distance of the two unrooted trees. In the third part of the paper we consider the ``root uncertain'' variant of hybridization number. Here we are free to choose the root
location in each of a set of unrooted input trees such that the hybridization number of the resulting rooted trees is minimized. On the negative side we show that this
problem is APX-hard. On the positive side, we show that the problem is FPT in the hybridization number, via kernelization, for any number of input trees.
\end{abstract}

\pagebreak

\section{Introduction}

Within the field of phylogenetics the evolutionary history of a set of contemporary species $X$, known as \textit{taxa}, is usually modelled as a tree where the leaves are
bijectively labelled by $X$. One of the central challenges in phylogenetics is to accurately infer this history given only measurements  on $X$ (e.g. one string of DNA per
species in $X$) and to this end many different optimality criteria have been proposed \cite{felsenstein2004inferring,SempleSteel2003}. One issue is that algorithms which
construct evolutionary trees (henceforth: phylogenetic trees) usually produce \emph{unrooted} phylogenetic trees as output i.e. trees in which the direction of evolution is not
specified and thus the notion of ``common ancestor'' is not well-defined. Nevertheless, biologists are primarily interested in \emph{rooted} trees \cite{davidbook}, where the
root, and thus the direction of evolution, is specified. In practice this problem is often addressed by solving the tree-inference and root-inference problem simultaneously,
using a so-called ``outgroup'' \cite{phylogenetics-handbook}. However, this process is prone to error (see \cite{Wilberg01072015} for a recent case-study) and disputes over
rooting location are prominent in the literature (see e.g. \cite{Drew01052014}).

Moreover,  in recent years there has been growing interest in algorithms that construct rooted phylogenetic \emph{networks} \cite{HusonRuppScornavacca10}, essentially the
generalization of rooted phylogenetic trees to rooted directed acyclic graphs. One popular methodology is to construct phylogenetic networks by merging sets of trees according
to some optimality criterion \cite{Nakhleh2009ProbSolv,Huson2011}. For example, in the \textsc{Hybridization Number (\hn)} problem we are given a set of rooted phylogenetic
trees as input and we are required  to topologically embed them into a network $N=(V,E)$
 such that the reticulation number $r(N) = |E| - (|V|-1)$ is minimized; the minimum value thus obtained is known as the hybridization number of the input trees.
 This problem is NP-hard and APX-hard \cite{bordewich} and has similar (in)approximability properties to the classical problem
\textsc{Directed Feedback Vertex Set (DFVS)} \cite{approximationHN}, which is not known to be in APX (i.e. it is not known whether it permits constant-factor polynomial-time
approximation algorithms). On a more positive note, there has been considerable progress on developing fixed parameter tractable (FPT) algorithms for {\hn}. Informally, these
are algorithms which solve {\hn} in time $O( f(k) \cdot \text{poly}(n) )$ where $n$ is the size of the input, $k$ is the hybridization number of the input trees and $f$ is some
computable function that only depends on $k$. FPT algorithms have the potential to run quickly for large $n$, as long as $k$ is small (see \cite{downey2013fundamentals} for an
introduction), and they can be highly effective in applied phylogenetics (see e.g. \cite{Whidden2014,lv2013practical,Gramm2008}). In \cite{sempbordfpt2007} it was proven that
{\hn} is FPT (in the hybridization number) for two input trees  and in recent years the result has been generalized in a number of directions (see \cite{vanIersel20161075} and
the references therein for a recent overview).

One modelling issue with {\hn} is that a poor and/or inconsistent choice of the root location in the input trees can artificially inflate the hybridization number, and this in
turn can (alongside other methodological errors) be misinterpreted as evidence that reticulate evolutionary phenomena such as horizontal gene transfer are abundant
\cite{wendel1998phylogenetic,davidbook}. To take a simple example, consider two identical unrooted trees on a set $X$ of $n$ taxa which \emph{should}, in principle, be rooted
in the same place, so the hybridization number should be 0. If, however, they are rooted in different places due to methodological error, the hybridization number will be at
least 1, and in the worst case can rise to $n-2$. The effect is reinforced as the number of trees in the input increases.

To this end, in this article we study a number of variations of {\hn} (and related decision problems) in which the root has a relaxed role, or no role whatsoever. The first
major part of the article is Section \ref{UTC} in which we analyse the \textsc{Unrooted Tree Containment (UTC)} problem. This is simply the problem of determining whether an
unrooted phylogenetic network $N$ has an unrooted phylogenetic tree $T$ topologically embedded within it. (Following \cite{GBP2012}, an \emph{unrooted phylogenetic network} is
simply a connected, undirected graph where every internal node has degree 3 and the leaves, as usual, are bijectively labelled by $X$).  The rooted version of this problem has
received extensive interest \cite{ISS2010b,bordewich2016reticulation,gunawan2016locating} and, although NP-hard \cite{clustercontainment}, permits a trivial FPT algorithm,
parameterized by the reticulation number of $N$. Here we show that UTC is also NP-hard, addressing a number of technicalities that do not emerge in the rooted case, and FPT in
the reticulation number of $N$. However, here the FPT algorithm is not trivial. We describe a linear kernel based on contracting common chains and subtrees, and a
bounded-search branching algorithm with running time $O(4^k n^2)$, where $k$ is the reticulation number of the network and $n$ is the number of nodes in the network.

In Section~\ref{sec:uhn}, a comparatively short section, we consider the \textsc{Unrooted Hybridization Number (UHN)} problem, where both the input  trees and the output
network are \textit{unrooted}. In this section we restrict our attention to the case when the input has exactly two trees $T_1$ and $T_2$ and we simply ask to find an unrooted
network that displays them both such that the reticulation number of the network is minimized. Consider for example the case of Figure \ref{fig:caterpillars} where we are given
two \textit{unrooted} trees $T_1, T_2$ as input. $N_u$ is a network that displays them both such that $r(N_u) = 1$ and this is optimal. Slightly surprisingly we show that for
UHN the minimum reticulation number of any network that contains both $T_1$ and $T_2$, is equal to the \textsc{Tree Bisection and Reconnection (TBR)} distance of $T_1$ and
$T_2$, which (as is well-known) in turn is equal to the size of an optimum solution to the \textsc{Maximum Agreement Forest (MAF)} problem, minus 1 \cite{AllenSteel2001}.
Hence, the {\uhn} problem on two trees immediately inherits both negative and positive results about TBR/MAF: NP-hardness on one hand, but constant-factor polynomial-time
approximation algorithms and FPT algorithms on the other. This shows that, from an approximation perspective, {\uhn} might be strictly easier than its rooted counterpart which,
as mentioned earlier, might not be in APX at all. It also means that {\uhn} benefits from ongoing, intensive research into {\maf} \cite{whidden2013fixed,
chen2015parameterized,chen2015approximating,bordewich2016fixed}.

In the second major part of the article, Section \ref{sec:ruhn}, we consider the \textsc{Root Uncertain Hybridization Number} (\ruhn) problem. Here the input is a set of
\emph{unrooted} binary trees and we are  to choose the root location of each tree, such that the reticulation number is minimized. See again Figure \ref{fig:caterpillars}. In
contrast with UHN, if we have to root each of $T_1,T_2$ then the minimum reticulation number is 2 and this is achieved by the rooted network $N_r$. This simple example also
shows that UHN can be strictly smaller than RUHN, a point we will elaborate on in the preliminaries. Biologically speaking, RUHN is the most relevant problem we study because
it explicitly acknowledges the fact that the input unrooted trees need to be rooted in some way. This highlights the fact that a root exists, but its location is uncertain and
we would like to infer the root locations such that the reticulation number of a network that displays them all is minimized. On the negative side we show that this problem,
which was explored experimentally in \cite{Whidden2014}, is already NP-hard and APX-hard for two trees. On the positive side,
we show that the problem is FPT (in the hybridization number) for any number of trees, giving a quadratic-sized kernel and discussing how an exponential-time algorithm can be
obtained for solving the kernel. Similar ideas were introduced for the rooted variant in \cite{ierselLinz2013}. Finally, in Section~\ref{sec:conclusions} we conclude with a
number of open questions and future research directions.

\begin{figure}
\centering
\begin{tikzpicture}[x=.7cm, y=.7cm]
\tikzset{lijn/.style={ultra thick}}
\tikzset{stippellijn/.style={ultra thick, dotted}}
\draw[very thick, fill, radius=0.1] (5.5,1.5) circle;
\draw[lijn] (5.5,1.5) -- (6,1);
\draw (5.2,1.5) node {$a$};
\draw[very thick, fill, radius=0.1] (5.5,0.5) circle;
\draw[lijn] (5.5,0.5) -- (6,1);
\draw (5.2,0.5) node {$b$};
\draw[very thick, fill, radius=0.1] (6,1) circle;
\draw[lijn] (6,1) -- (7,1);
\draw[very thick, fill, radius=0.1] (7,1) circle;
\draw[lijn] (7,1) -- (8,1);
\draw[very thick, fill, radius=0.1] (8,1) circle;
\draw[lijn] (8,1) -- (9,1);
\draw[very thick, fill, radius=0.1] (9,1) circle;
\draw[very thick, fill, radius=0.1] (9.5,0.5) circle;
\draw (9.8,0.5) node {$e$};
\draw[lijn] (9,1) -- (9.5,0.5);
\draw[very thick, fill, radius=0.1] (9.5,1.5) circle;
\draw[lijn] (9,1) -- (9.5,1.5);
\draw (9.8,1.5) node {$f$};
\draw[very thick, fill, radius=0.1] (7,.5) circle;
\draw[lijn] (7,1) -- (7,.5);
\draw (7,0) node {$c$};
\draw[very thick, fill, radius=0.1] (8,.5) circle;
\draw[lijn] (8,1) -- (8,.5);
\draw (8,.1) node {$d$};
\draw (7.5,-0.5) node {$T_1$};
\draw[very thick, fill, radius=0.1] (5.5,-1.5) circle;
\draw[lijn] (5.5,-1.5) -- (6,-2);
\draw (5.2,-1.5) node {$c$};
\draw[very thick, fill, radius=0.1] (5.5,-2.5) circle;
\draw[lijn] (5.5,-2.5) -- (6,-2);
\draw (5.2,-2.5) node {$b$};
\draw[very thick, fill, radius=0.1] (6,-2) circle;
\draw[lijn] (6,-2) -- (7,-2);
\draw[very thick, fill, radius=0.1] (7,-2) circle;
\draw[lijn] (7,-2) -- (8,-2);
\draw[very thick, fill, radius=0.1] (8,-2) circle;
\draw[lijn] (8,-2) -- (9,-2);
\draw[very thick, fill, radius=0.1] (9,-2) circle;
\draw[very thick, fill, radius=0.1] (9.5,-2.5) circle;
\draw (9.8,-2.5) node {$e$};
\draw[lijn] (9,-2) -- (9.5,-2.5);
\draw[very thick, fill, radius=0.1] (9.5,-1.5) circle;
\draw[lijn] (9,-2) -- (9.5,-1.5);
\draw (9.8,-1.5) node {$d$};
\draw[very thick, fill, radius=0.1] (7,-2.5) circle;
\draw[lijn] (7,-2) -- (7,-2.5);
\draw (7,-2.9) node {$a$};
\draw[very thick, fill, radius=0.1] (8,-2.5) circle;
\draw[lijn] (8,-2) -- (8,-2.5);
\draw (8,-2.9) node {$f$};
\draw (7.5,-3.5) node {$T_2$};
\end{tikzpicture}
\hspace{.5cm}
\begin{tikzpicture}[x=.7cm, y=.7cm]
\tikzset{lijn/.style={ultra thick}}
\tikzset{stippellijn/.style={ultra thick, dotted}}
\draw[very thick, fill, radius=0.1] (1,0) circle;
\draw[very thick, fill, radius=0.1] (2,0) circle;
\draw[very thick, fill, radius=0.1] (0,1) circle;
\draw[very thick, fill, radius=0.1] (1,2) circle;
\draw[very thick, fill, radius=0.1] (2,2) circle;
\draw[very thick, fill, radius=0.1] (3,1) circle;
\draw[lijn] (1,0) -- (2,0);
\draw[lijn] (2,0) -- (3,1);
\draw[lijn] (3,1) -- (2,2);
\draw[lijn] (2,2) -- (1,2);
\draw[lijn] (1,2) -- (0,1);
\draw[lijn] (0,1) -- (1,0);
\draw[very thick, fill, radius=0.1] (-.5,1) circle;
\draw[lijn] (-.5,1) -- (0,1);
\draw (-.8,1) node {$a$};
\draw[very thick, fill, radius=0.1] (1,2.5) circle;
\draw[lijn] (1,2.5) -- (1,2);
\draw (1,2.9) node {$b$};
\draw[very thick, fill, radius=0.1] (2,2.5) circle;
\draw[lijn] (2,2.5) -- (2,2);
\draw (2,2.9) node {$c$};
\draw[very thick, fill, radius=0.1] (1,-.5) circle;
\draw[lijn] (1,-.5) -- (1,0);
\draw (1,-.9) node {$f$};
\draw[very thick, fill, radius=0.1] (2,-.5) circle;
\draw[lijn] (2,-.5) -- (2,0);
\draw (2,-.9) node {$e$};
\draw[very thick, fill, radius=0.1] (3.5,1) circle;
\draw[lijn] (3.5,1) -- (3,1);
\draw (3.8,1) node {$d$};
\draw (1.5,-2.5) node {$N_u$};
\end{tikzpicture}
\hspace{.5cm}
\begin{tikzpicture}[x=.7cm, y=.7cm]
\tikzset{lijn/.style={ultra thick}}
\tikzset{stippellijn/.style={ultra thick, dotted}}
\tikzset{pijl/.style={ultra thick,->}}
\draw[very thick, fill, radius=0.1] (0,7) circle;
\draw[pijl] (0,7) -- (-1,6);
\draw[pijl] (0,7) -- (1,6);
\draw[very thick, fill, radius=0.1] (-1,6) circle;
\draw[pijl] (-1,6) -- (-2,5);
\draw[very thick, fill, radius=0.1] (-2,5) circle;
\draw[pijl] (-2,5) -- (-3,4);
\draw[very thick, fill, radius=0.1] (-3,4) circle node[below] {$a$};
\draw[pijl] (-2,5) -- (-2,4);
\draw[very thick, fill, radius=0.1] (-2,4) circle;
\draw[pijl] (-2,4) -- (-2,3);
\draw[very thick, fill, radius=0.1] (-2,3) circle node[below] {$b$};
\draw[pijl] (-2,4) -- (-1,3);
\draw[pijl] (-1,6) -- (-1,3);
\draw[very thick, fill, radius=0.1] (-1,3) circle;
\draw[pijl] (-1,3) -- (-1,2);
\draw[very thick, fill, radius=0.1] (-1,2) circle node[below] {$c$};
\draw[very thick, fill, radius=0.1] (1,6) circle;
\draw[pijl] (1,6) -- (1,3);
\draw[very thick, fill, radius=0.1] (1,3) circle;
\draw[pijl] (1,6) -- (2,5);
\draw[very thick, fill, radius=0.1] (2,5) circle;
\draw[pijl] (2,5) -- (3,4);
\draw[very thick, fill, radius=0.1] (3,4) circle node[below] {$d$};
\draw[pijl] (2,5) -- (2,4);
\draw[very thick, fill, radius=0.1] (2,4) circle;
\draw[pijl] (2,4) -- (2,3);
\draw[very thick, fill, radius=0.1] (2,3) circle node[below] {$e$};
\draw[pijl] (2,4) -- (1,3);
\draw[pijl] (1,3) -- (1,2);
\draw[very thick, fill, radius=0.1] (1,2) circle node[below] {$f$};
\draw (0,1) node {$N_r$};
\end{tikzpicture}
\caption{\label{fig:caterpillars} \emph{Two unrooted  trees~$T_1,T_2$, an unrooted network~$N_u$ with reticulation number~1 that displays~$T_1$ and~$T_2$ and a rooted
network~$N_r$ that displays rootings of~$T_1$ and~$T_2$ and has reticulation number~2. $N_u$ is an optimal solution to the UHN problem, while $N_r$ is an optimal solution to the
RUHN problem.}}
\end{figure}
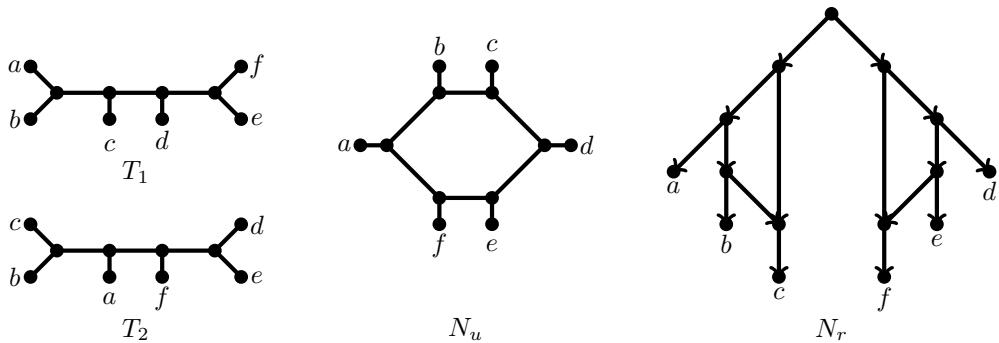

We begin with a section dedicated to preliminaries in which we formally define all the models studied in this paper and briefly discuss their differences.

\section{Preliminaries}

A \emph{rooted binary phylogenetic network} $N=(V,E)$ on a set of leaf-labels (also known as \emph{taxa}) $X$ (where $|X| \geq 2$) is a directed acyclic graph (DAG)  in which
the leaves are bijectively labelled by $X$, there is exactly one \emph{root} (a node with indegree 0 and outdegree 2), and all nodes are either \emph{tree} nodes (indegree 1,
outdegree 2) or \emph{reticulation nodes} (indegree 2, outdegree 1). The \emph{reticulation number} $r(N)$ of $N$ is defined as $|E|-(|V|-1)$, which is equal to the number of
reticulation nodes in $N$. A rooted binary phylogenetic network $N$ which has $r(N)=0$ is simply called a \emph{rooted binary phylogenetic tree}. Two rooted binary phylogenetic
networks $N_1$ and $N_2$ on $X$ are said to be isomorphic if there exists an isomorphism between $N_1$ and $N_2$ which is the identity on $X$.

Similarly, an \emph{unrooted binary phylogenetic network} on $X$ is simply a connected, undirected graph $N=(V,E)$ with $|X|$ nodes of degree 1 (i.e., leaves), labelled
bijectively by $X$, and all other nodes are of degree 3. Although notions of indegree and outdegree do not apply here, reticulation number can still naturally be defined as
$r(N) = |E|-(|V|-1)$. An unrooted binary phylogenetic tree is an unrooted binary phylogenetic network with $r(N)=0$. See Figure \ref{fig:caterpillars} for examples of rooted
and unrooted networks.

Throughout the article we will occasionally refer to \emph{caterpillars}. For  $n \geq 4$ an \emph{unrooted caterpillar} $(x_1, ..., x_n)$ is the unrooted binary phylogenetic
tree constructed as follows: it consists of a central path $(p_2, ..., p_{n-1})$ with a single taxon $x_i$ adjacent to each $p_i$ (for $3 \leq i \leq n-2$), two taxa $x_1$ and
$x_2$ adjacent to $p_2$ and two taxa $x_{n-1}$ and $x_n$ adjacent to $p_{n-1}$. The two trees shown in Figure \ref{fig:caterpillars} are both unrooted caterpillars with $n=6$.
A \emph{rooted caterpillar} is obtained by subdividing the edge $\{p_2,x_1\}$, taking the newly created node $p_1$ as the root and directing all edges away from it.

We say that a rooted binary phylogenetic network  $N$ on $X$ \emph{displays} a rooted binary phylogenetic tree $T$ on $X$ if $T$ can be obtained from a subtree $T'$ of $N$ by
suppressing nodes with indegree 1 and outdegree 1. Similarly, an unrooted binary phylogenetic network $N$ on $X$ \emph{displays} an unrooted  binary phylogenetic tree $T$ on
$X$ if $T$ can be obtained from a subtree $T'$ of $N$ by suppressing nodes of degree 2. In both cases we say that $T'$ is an \emph{image} of $T$.

Consider the following problem, which has been studied extensively, and its unrooted variant.\\
\\
\textbf{Problem: } \textsc{Hybridization Number (HN)}\\
\textbf{Input: } A set $\mathcal{T}$ of rooted binary phylogenetic trees on the same set of
taxa $X$.\\
\textbf{Output: } A rooted binary phylogenetic network $N$ on $X$ such that, for each
$T \in \mathcal{T}$, $N$ displays $T$.\\
\textbf{Goal: } Minimize $r(N)$.\\
\\
The minimum value of $r(N)$ thus obtained we denote by $h^{r}(\mathcal{T})$ and this is also called the \emph{hybridization number} of $\mathcal{T}$. Note that feasible
solutions to {\hn} should include certificates verifying that the input trees are actually displayed by $N$. A certificate in this case is usually an image of each tree
embedded in $N$. Determining whether a rooted network displays a rooted tree - this is the \textsc{Tree Containment} problem - is NP-hard \cite{clustercontainment}, hence the
need for these certificates. (Certificates are thus also required for {\ruhn}, to be defined in due course).

\textsc{HN} is APX-hard (and thus NP-hard) \cite{bordewich} but FPT in parameter $h^{r}(\mathcal{T})$ \cite{ierselLinz2013}. That is, answering the question ``Is
$h^{r}(\mathcal{T}) \leq k$?'' can be answered in time $O( f(k) \cdot \text{poly}(n) )$ where $f$ is a computable function that only depends on $k$ and $n$ is the size of the
input to \textsc{HN}.
It is well-known that $h^{r}(\mathcal{T}) = 0$ if and only if all the trees in $\mathcal{T}$ are isomorphic, which can easily be checked in polynomial time \cite{SempleSteel2003}.\\
\\
\textbf{Problem: } \textsc{Unrooted Hybridization Number (UHN)}\\
\textbf{Input: } A set $\mathcal{T}$ of unrooted binary phylogenetic trees on the same set of
taxa $X$. \\
\textbf{Output: } An unrooted binary phylogenetic network $N$ on $X$ such that, for each
$T \in \mathcal{T}$, $N$ displays $T$. \\
\textbf{Goal: } Minimize $r(N)$. \\
\\
We write $h^u(\mathcal{T})$ to denote the minimum value of $r(N)$ thus obtained.  It is natural to ask: do feasible solutions to {\uhn} require, as in the rooted case,
certificates verifying that the input trees are displayed by the network? This motivates our study of the following problem, which we will start with in Section~\ref{UTC}. \\
\\
\textbf{Problem: } \textsc{Unrooted Tree Containment} (\utc)\\
\textbf{Input: } An unrooted binary phylogenetic network $N$ and an unrooted binary phylogenetic tree $T$, both  on $X$.\\
\textbf{Question: } Does $N$ display $T?$\\
\\
Finally, we will consider the variant in which we require the unrooted input trees to be rooted. A rooted binary phylogenetic tree $T'$ on $X$ is a \emph{rooting} of an
unrooted binary phylogenetic tree $T$ if $T'$ can be obtained by subdividing an edge of $T$ with a new node $u$ and directing all edges away from $u$. We say that $T$ is the
\emph{unrooting} of $T'$, denoted $U(T')$.  \\
\\
\textbf{Problem: } \textsc{Root Uncertain Hybridization Number (RUHN)}\\
\textbf{Input: } A set $\mathcal{T}$ of unrooted binary phylogenetic trees on the same set of
taxa $X$.\\
\textbf{Output: } A root location (i.e. an edge) of each tree in $T \in \mathcal{T}$ (which induces a set of rooted binary phylogenetic trees $\mathcal{T'}$ on $X$) and a
rooted binary phylogenetic network  $N$ on $X$ such that, for each $T' \in \mathcal{T'}$, $N$ displays
$T'$.\\
\textbf{Goal: } Minimize $r(N)$.\\
\\
The minimum value of $r(N)$ obtained is denoted $h^{ru}(\mathcal{T})$ and is called the \emph{root-uncertain hybridization number} of $\mathcal{T}$. Note that if $\mathcal{T}$
is a set of rooted binary phylogenetic trees and $\mathcal{T^{*}}$ is the set of unrooted counterparts of $\mathcal{T}$ - that is, $\mathcal{T^{*}} = \{ U(T) | T \in
\mathcal{T}\}$  - then $h^{ru}(\mathcal{T^{*}})$ can differ significantly from $h^{r}(\mathcal{T})$. For example, if $\mathcal{T}$ consists of two rooted caterpillars on the
same set of $n$ taxa, but with opposite orientation, then $h^{r}(\mathcal{T}) = n-2$ whilst $h^{ru}(\mathcal{T^{*}}) = 0.$ More generally, a little thought should make it clear
that on a set $\mathcal{T}$ of binary rooted trees and the set $\mathcal{T^*}$ of their corresponding unrooted versions, we have:
\begin{eqnarray}
\label{eq:sequence}
  h^u(\mathcal{T^*}) \leq h^{ru}(\mathcal{T^*}) \leq h^{r}(\mathcal{T}).
\end{eqnarray}

It is possible to say more about this inequality chain. Let $\mathcal{T^*}$ be the two unrooted binary trees $T_1$ and $T_2$ shown in Figure \ref{fig:caterpillars}. It is easy
to see that $h^{u}(\mathcal{T^*})=1$: we simply arrange the taxa in a circle with circular ordering $a,b,c,d,e,f$ (see $N_u$ in Figure \ref{fig:caterpillars}). However, as can
be verified by case analysis (or using the ``re-root by hybridization number''
 functionality in \textsc{Dendroscope} \cite{huson2012dendroscope}), $h^{ru}(\mathcal{T^*})=2$. Moreover, let $\mathcal{T}$ be the two rooted trees
obtained by rooting the first tree on the edge entering $a$, and the second tree on the edge entering $e$. It can be verified that $h^{r}(\mathcal{T})=3$. Hence,  $\mathcal{T}$
is an example when both inequalities in (\ref{eq:sequence}) are simultaneously strict.


\section{The Tree Containment problem  on unrooted networks and trees}\label{UTC}

\noindent Given a rooted binary phylogenetic network $N = (V,E)$ on $X$ and a rooted binary phylogenetic tree $T$ also on $X$ it is trivial to determine in time $O( 2^k \cdot \text{poly}(n) )$ whether
$N$ displays $T$, where $k = r(N) = |E|- (|V|-1)$ and $n=|V|$. This is because, for each of the $k$ reticulation nodes, we can simply guess which of its two incoming edges are on the image of $T$.
Here we consider the natural unrooted analogue of the problem where both $N$ and $T$ are unrooted.

We show that the question whether
$N$ displays $T$ is NP-hard, but FPT when parameterized by $k=r(N)=|E|-(|V|-1)$. Note that, unlike for the rooted case, an FPT result here is not trivial,  since the notion ``reticulation node'' no
longer has any meaning.

\subsection{The hardness of \textsc{Unrooted Tree Containment (UTC)}}
\begin{thm}
UTC is NP-hard.
\end{thm}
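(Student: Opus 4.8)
The plan is to reduce from the rooted \textsc{Tree Containment} problem, which is already known to be NP-hard \cite{clustercontainment}. Starting from a rooted binary network $N$ and a rooted binary tree $T$ on a common taxon set $X$, I would construct in polynomial time an unrooted binary network $N'$ and an unrooted binary tree $T'$ on an enlarged taxon set $X \cup F$, with the taxa of $F$ fresh, such that $N$ displays $T$ if and only if $N'$ displays $T'$. The genuine source of difficulty, which does not arise in the rooted problem, is that an unrooted network carries no orientation and hence no notion of ``reticulation node'': passing from $N$ to its unrooting turns every reticulation into an ordinary degree-$3$ vertex, so an embedding of $T'$ into $N'$ is free to traverse edges ``the wrong way'' and to re-root parts of the network, realising containments with no rooted analogue. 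The reduction must therefore not merely unroot $(N,T)$ but also install a gadget that restores enough rigidity to recover a root location and, with it, a globally consistent orientation.

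Concretely, I would attach to the root of each of $N$ and $T$ the \emph{same} pendant ``anchor'' $G$ on the fresh taxa $F$: a new root vertex whose two children are the old root and the top of $G$, where $G$ is a long rigid structure such as a caterpillar on $|F|$ taxa with $|F| > |X|$. Write $N' = U(N^+)$ and $T' = U(T^+)$ for the resulting unrootings. Two features are designed in. First, the edge lying just above the old root becomes a bridge of $N'$, so any connected image of $T'$ crosses it exactly once and is thereby split into an ``anchor part'' and a ``body part''. Second, since $G$ has more leaves than all of $X$, any image of the anchor part must use almost all of the anchor leaves of $N'$; by the standard rigidity of long caterpillars this pins the image of $G$ essentially onto the anchor of $N'$, and in particular pins the image of the old root of $T$ to (a neighbourhood of) the old root of $N$. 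The forward direction is then immediate: a rooted embedding of $T$ into $N$ extends to a rooted embedding of $T^+$ into $N^+$ by mapping the anchor identically, and unrooting it yields an embedding of $T'$ into $N'$.

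The heart of the proof, and what I expect to be the main obstacle, is the converse: that every unrooted embedding of $T'$ into $N'$ restricts to a genuine rooted embedding of $T$ into $N$. Granting the rigidity of the anchor, the body part of the image is a subtree of $U(N)$ that contains all $X$-leaves and meets the body of $N$ only near the old root; one must then orient this subtree away from that vertex and show that the orientation agrees edge by edge with that of $N$, i.e.\ that no edge of $N$ is used against its direction. This is precisely where the absence of ``reticulation node'' matters: one must rule out a ``crossover'' embedding that routes part of $T'$ up through one parent edge of a reticulation of $N$ and down through the other, using the fact that all internal vertices of $N'$ have degree $3$ and so the images of the edges of $T'$ form an internally vertex-disjoint path family with no slack. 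This calls for a careful case analysis of the local structure at each vertex of $N'$, and choosing the size and shape of the anchor so that every exotic routing is excluded is the delicate step. The remaining points — that $N'$ and $T'$ are binary, that the subdivisions created by attaching $G$ introduce no new embedding possibilities, and that the construction is polynomial — I would handle afterwards as routine. If controlling arbitrary unrooted embeddings in this way turns out to be unwieldy, a fallback is to lift the \textsc{3-SAT}-based hardness proof for rooted \textsc{Tree Containment} directly into the unrooted setting, encoding each boolean variable by a choice that the cyclomatic structure of $N'$ forces on any image of $T'$ and replacing the now-unavailable orientation by explicit consistency gadgets; the technical burden is similar and of the same character.
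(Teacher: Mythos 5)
There is a genuine gap, and it sits exactly where you yourself flag it. Your plan is to reduce from rooted \textsc{Tree Containment} by unrooting $(N,T)$ and attaching a long caterpillar ``anchor'' at the root, and the entire burden of correctness falls on the converse direction: that any unrooted image of $T'$ in $N'$, once the anchor pins the root, must respect the original orientation of $N$ edge by edge. You do not prove this; you only observe that it ``calls for a careful case analysis'' and that choosing the anchor so that ``every exotic routing is excluded is the delicate step.'' But the anchor is a purely local device at the root: it controls where the image of the old root of $T$ sits, and nothing more. It places no constraint on how the image threads through a reticulation deep inside the network, where an unrooted image may perfectly well use the two former in-edges of a reticulation node and omit the former out-edge (a ``sideways'' traversal with no rooted analogue), or run a path upward through one parent of a reticulation and downward through an ancestor elsewhere. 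No size or shape of a caterpillar hanging off the root excludes these routings, so the proposed gadget does not establish the equivalence, and no alternative mechanism is supplied. As it stands the reduction is a plan with its central lemma missing, and it is not clear the lemma is even true for this construction.

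For comparison, the paper does not attempt to transfer rooted \tc{} hardness at all. It reduces from \textsc{Node Disjoint Paths} on undirected cubic graphs (mirroring the rooted hardness proof, which reduces from \ndp{} on directed graphs): after normalising the instance so that terminals are exactly the degree-one vertices and each lies in exactly one pair, the tree $T$ is a caterpillar of cherries $\{s_i,t_i\}$ and the network $N$ is that caterpillar's spine with the graph $G'$ glued in below, so that an image of $T$ exists precisely when the required pairwise node-disjoint $s_i$--$t_i$ paths exist. This sidesteps the orientation-control problem entirely, because the undirected disjoint-paths structure is what the unrooted image naturally certifies. If you want to salvage your approach you would need either a proof that root-pinning forces orientation (which I doubt holds for your gadget), or per-reticulation gadgets that locally forbid sideways traversals --- at which point you are rebuilding a disjoint-paths-style argument anyway.
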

\begin{proof}
We reduce from the problem \textsc{Node Disjoint Paths On Undirected Graphs} (\ndp). The reduction is similar in spirit to the reduction given in \cite{clustercontainment},
where the hardness of tree containment on \emph{rooted} networks was proven by reducing from {\ndp} on \emph{directed} graphs. However, our reduction has to deal with a number
of subtleties specific to the case of unrooted trees and networks.

{\ndp} is defined as follows. We are given an undirected graph $G=(V,E)$ and a multiset of unordered pairs of nodes $W=\{ \{s_1, t_1\}, \ldots, \{s_k, t_k\} \}$, where for each
$i$, $s_i \neq t_i$. Note that we do not assume a distinction between the $s$ nodes and the $t$ nodes (we refer to them together as \emph{terminals}), and the same pair can
appear several times. The question is: do there exist $k$ paths $P_i$ ($1 \leq i \leq k$) such that $P_i$ connects $s_i$ to $t_i$, and such that the $P_i$ are mutually
node-disjoint?

The literature is somewhat ambiguous about whether endpoints of the paths are allowed to intersect, and of course this is a necessary condition if we are to
allow some terminal to appear in more than one pair in $W$. We posit as few restrictions as possible on the input - specifically, we allow each terminal to be in multiple pairs
- and then show that this can be reduced to a more restricted instance. We do however make use of the fact that {\ndp} remains hard on cubic graphs\footnote{ This follows from
\cite{richards1984complexity}. In that article the hardness of {\ndp} is proven for undirected graphs of maximum degree 3, but using standard gadgets nodes of degree 1 or 2 can
easily be upgraded to degree 3. See also \cite{schrijver2002combinatorial}, p. 1225 for a related discussion.}, and assume henceforth that $G$ is cubic.

We start by first reducing the cubic {\ndp} instance
$(G,W)$ to a new instance $(G',W')$ where $G'$ has maximum degree 3 and no nodes of degree 2, each terminal appearing within $W'$ is in exactly one pair, and a node of $G'$ is
a terminal if and only if it has degree 1. As usual, the idea is that $(G,W)$ is a YES instance for {\ndp} if and only $(G',W')$ is. The transformation to $(G',W')$ is
straightforward. Observe firstly that in the $(G,W)$ instance each terminal can appear in at most 3 pairs (otherwise it is trivially a NO instance). Depending on whether a
terminal is in 1, 2, or 3 pairs we use a different transformation.

\begin{enumerate}
\item \emph{A terminal is in 3 pairs in $W$: $\{s_i, t_i\}, \{s_j, t_j\}, \{s_k, t_k\}$ where $s_i = s_j= s_k$.} We split the terminal into 3 distinct nodes; see Figure
    \ref{fig:disjoint}(left).
\item \emph{A terminal is in 2 pairs in $W$: $\{s_i, t_i\}, \{s_j, t_j\}$ where $s_i = s_j$.} We split the terminal into 2 distinct nodes; see Figure
    \ref{fig:disjoint}(middle).
\item \emph{A terminal is in 1 pair in $W$: $\{s_i, t_i\}$.} Here we do not split the terminal but we do introduce a new terminal pair $\{p,q\}$; see Figure
    \ref{fig:disjoint}(right). The introduction of $\{p,q\}$ concerns the fact that, prior to the transformation, at most one of the node disjoint paths can intersect with
    node $s_i$. The presence of $\{p,q\}$ ensures that, after transformation, at most one path can intersect with the image of this node. (A simpler transformation is not
    obviously possible, due to the degree restrictions on $G'$).
\end{enumerate}

\begin{figure}
\centering
\begin{tikzpicture}[x=.7cm, y=.7cm]
\tikzset{lijn/.style={ultra thick}}
\tikzset{pijl/.style={ultra thick,->}}

\draw[lijn] (0,1) -- (0,0.1);
\draw[very thick, fill =white, radius=0.1] (0,0) circle node[above right] {$s_i=s_j=s_k$};
\draw[lijn] (-1,0) -- (-0.1,0);
\draw[lijn] (1,0) -- (0.1,0);

\draw[pijl] (0,-1) -- (0,-2);
\draw[pijl] (-2,-1) -- (-3,-2);
\draw[pijl] (2,-1) -- (3,-2);

\draw[lijn] (-6,-3) -- (-6,-4);
\draw[very thick, fill, radius=0.1] (-6,-4) circle;
\draw[lijn] (-6,-4) -- (-6,-5);
\draw[lijn] (-6,-5) -- (-7,-5);
\draw[lijn] (-6,-5) -- (-5,-5);
\draw[very thick, fill, radius=0.1] (-7,-5) circle;
\draw[lijn] (-7,-5) -- (-8,-5);
\draw[very thick, fill, radius=0.1] (-5,-5) circle;
\draw[lijn] (-5,-5) -- (-4,-5);
\draw[lijn] (-7,-5) -- (-7,-6);
\draw[very thick, fill, radius=0.1] (-7,-6) circle;
\draw[lijn] (-5,-5) -- (-5,-6);
\draw[very thick, fill, radius=0.1] (-5,-6) circle;
\draw[lijn] (-7,-6) -- (-7,-7);
\draw[very thick, fill, radius=0.1] (-7,-7) circle;
\draw[lijn] (-5,-6) -- (-5,-7);
\draw[very thick, fill, radius=0.1] (-5,-7) circle;
\draw[lijn] (-5,-6) -- (-7,-7);
\draw[lijn] (-7,-6) -- (-5,-7);
\draw[lijn] (-5,-7) -- (-5,-8);
\draw[very thick, fill, radius=0.1] (-5,-8) circle;
\draw[lijn] (-6,-4) to[out=-60,in=60] (-6,-8);
\draw[very thick, fill, radius=0.1] (-6,-8) circle;
\draw[lijn] (-5,-8) -- (-5,-9);
\draw[very thick, fill, radius=0.1] (-5,-9) circle;
\draw[lijn] (-6,-8) -- (-6,-9);
\draw[very thick, fill, radius=0.1] (-6,-9) circle;
\draw[lijn] (-5,-8) -- (-6,-9);
\draw[lijn] (-6,-8) -- (-5,-9);
\draw[lijn] (-6,-9) -- (-6,-10);
\draw[very thick, fill, radius=0.1] (-6,-10) circle;
\draw[lijn] (-7,-7) -- (-7,-10);
\draw[very thick, fill, radius=0.1] (-7,-10) circle;
\draw[lijn] (-6,-10) -- (-6,-11);
\draw[very thick, fill, radius=0.1] (-6,-11) circle;
\draw[lijn] (-7,-10) -- (-7,-11);
\draw[very thick, fill, radius=0.1] (-7,-11) circle;
\draw[lijn] (-7,-10) -- (-6,-11);
\draw[lijn] (-6,-10) -- (-7,-11);
\draw[lijn] (-7,-11) -- (-7,-12);
\draw[very thick, fill, radius=0.1] (-7,-12) circle node[below] {$s_i$};
\draw[lijn] (-6,-11) -- (-6,-12);
\draw[very thick, fill, radius=0.1] (-6,-12) circle node[below] {$s_j$};
\draw[lijn] (-5,-9) -- (-5,-12);
\draw[very thick, fill, radius=0.1] (-5,-12) circle node[below] {$s_k$};

\draw[very thick, fill=white, radius=0.1] (-6,-5) circle;

\draw[lijn] (0,-3) -- (0,-5);
\draw[lijn] (0,-5) -- (-1,-5);
\draw[lijn] (0,-5) -- (1,-5);
\draw[very thick, fill, radius=0.1] (-1,-5) circle;
\draw[lijn] (-1,-5) -- (-2,-5);
\draw[very thick, fill, radius=0.1] (1,-5) circle;
\draw[lijn] (1,-5) -- (2,-5);
\draw[lijn] (-1,-5) -- (-1,-6);
\draw[very thick, fill, radius=0.1] (-1,-6) circle;
\draw[lijn] (1,-5) -- (1,-6);
\draw[very thick, fill, radius=0.1] (1,-6) circle;
\draw[lijn] (-1,-6) -- (-1,-7);
\draw[very thick, fill, radius=0.1] (-1,-7) circle;
\draw[lijn] (1,-6) -- (1,-7);
\draw[very thick, fill, radius=0.1] (1,-7) circle;
\draw[lijn] (1,-6) -- (-1,-7);
\draw[lijn] (-1,-6) -- (1,-7);
\draw[lijn] (-1,-7) -- (-1,-12);
\draw[very thick, fill, radius=0.1] (-1,-12) circle node[below] {$s_i$};
\draw[lijn] (1,-7) -- (1,-12);
\draw[very thick, fill, radius=0.1] (1,-12) circle node[below] {$s_j$};

\draw[very thick, fill=white, radius=0.1] (0,-5) circle;

\draw[lijn] (6,-3) -- (6,-5);
\draw[very thick, fill, radius=0.1] (6,-4) circle;
\draw[lijn] (6,-4) -- (8,-4);
\draw[very thick, fill, radius=0.1] (8,-4) circle;
\draw[lijn] (8,-4) -- (8,-5);
\draw[very thick, fill, radius=0.1] (8,-5) circle;
\draw[lijn] (8,-5) -- (9,-5);
\draw[lijn] (8,-4) -- (9,-3);
\draw[very thick, fill, radius=0.1] (9,-3) circle node[right] {$q$};
\draw[lijn] (6,-5) -- (5,-5);
\draw[lijn] (6,-5) -- (7,-5);
\draw[very thick, fill, radius=0.1] (5,-5) circle;
\draw[lijn] (5,-5) -- (4,-5);
\draw[very thick, fill, radius=0.1] (7,-5) circle;
\draw[lijn] (7,-5) -- (8,-5);
\draw[lijn] (5,-5) -- (5,-6);
\draw[very thick, fill, radius=0.1] (5,-6) circle;
\draw[lijn] (7,-5) -- (7,-6);
\draw[very thick, fill, radius=0.1] (7,-6) circle;
\draw[lijn] (5,-6) -- (5,-7);
\draw[very thick, fill, radius=0.1] (5,-7) circle;
\draw[lijn] (7,-6) -- (7,-7);
\draw[very thick, fill, radius=0.1] (7,-7) circle;
\draw[lijn] (7,-6) -- (5,-7);
\draw[lijn] (5,-6) -- (7,-7);
\draw[lijn] (5,-7) -- (5,-12);
\draw[very thick, fill, radius=0.1] (5,-12) circle node[below] {$s_i$};
\draw[lijn] (7,-7) -- (7,-12);
\draw[very thick, fill, radius=0.1] (7,-12) circle node[below] {$p$};

\draw[very thick, fill=white, radius=0.1] (6,-5) circle;
\end{tikzpicture}
\caption{\label{fig:disjoint} \emph{Gadgets for obtaining a transformed {\ndp} instance $(G',W')$ where $G'$ has maximum degree 3, no nodes of degree 2, a node has degree 1 if and only if it is a terminal, and each terminal
appears in exactly one pair.}}
\end{figure}
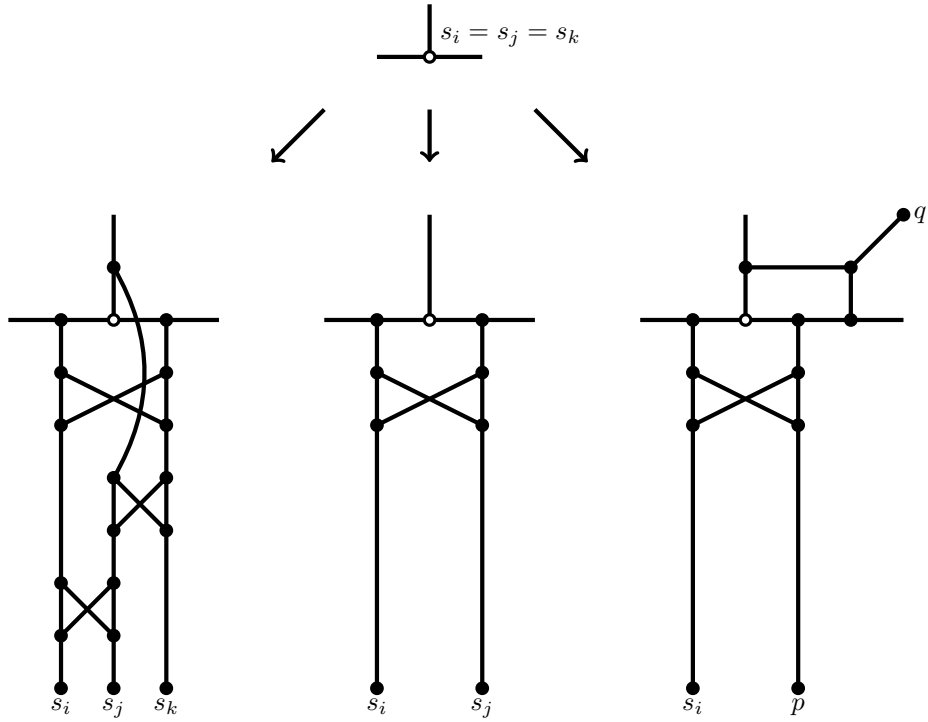

The transformations are applied as often as necessary to obtain the instance $(G',W')$. Let
$W'=\{\{s_1, t_1\}, \ldots, \{s_{k'}, t_{k'}\}\}$. Due to the fact that each terminal now appears in exactly
one pair, we can schematically view the $(G',W')$ instance as shown in Figure \ref{fig:schematic}.

\begin{figure}
\centering
\begin{tikzpicture}[x=.7cm, y=.7cm]
\tikzset{lijn/.style={ultra thick}}
\tikzset{stippellijn/.style={ultra thick, dotted}}

\draw[lijn] (-3.63,-6.63) -- (-5,-4.5);
\draw[very thick, fill, radius=0.1] (-5,-4.5) circle node[above] {$s_{k'}$};

\draw[lijn] (-1,-6) -- (-1,-4.5);
\draw[very thick, fill, radius=0.1] (-1,-4.5) circle node[above] {$s_{k'-1}$};

\node[draw=none,fill=none] at (2,-4.5) {$\ldots$};

\draw[lijn] (3.63,-6.63) -- (5,-4.5);
\draw[very thick, fill, radius=0.1] (5,-4.5) circle node[above] {$s_1$};

\draw (0,-8) node[ellipse, thick, minimum height=3cm,minimum width=8cm,draw, fill=lightgray] {$G'$};

\draw[lijn] (-3.63,-9.63) -- (-5,-11);
\draw[very thick, fill, radius=0.1] (-5,-11) circle node[below] {$t_{k'}$};

\draw[lijn] (-1,-10.1) -- (-1,-11);
\draw[very thick, fill, radius=0.1] (-1,-11) circle node[below] {$t_{k'-1}$};

\node[draw=none,fill=none] at (2,-11) {$\ldots$};

\draw[lijn] (3.63,-9.63) -- (5,-11);
\draw[very thick, fill, radius=0.1] (5,-11) circle node[below] {$t_1$};
\end{tikzpicture}
\caption{\label{fig:schematic} \emph{Schematic representation of the transformed {\ndp} instance $(G',W')$.}}
\end{figure}
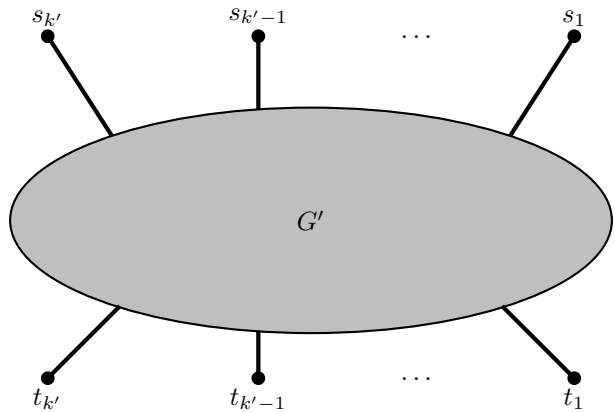

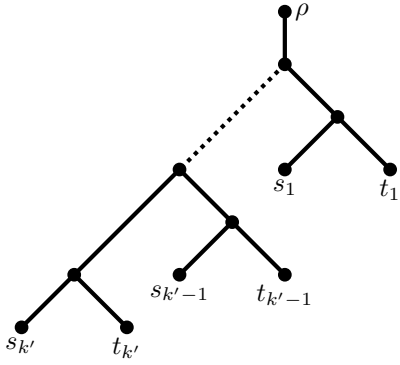
\begin{figure}
\centering
\begin{tikzpicture}[x=.7cm, y=.7cm]
\tikzset{lijn/.style={ultra thick}}
\tikzset{stippellijn/.style={ultra thick, dotted}}
\draw[very thick, fill, radius=0.1] (0,0) circle node[right] {$\rho$};
\draw[very thick, fill, radius=0.1] (0,-1) circle; 
\draw[very thick, fill, radius=0.1] (1,-2) circle; 
\draw[very thick, fill, radius=0.1] (2,-3) circle node[below] {$t_1$};
\draw[very thick, fill, radius=0.1] (0,-3) circle node[below] {$s_1$};
\draw[very thick, fill, radius=0.1] (-2,-3) circle; 
\draw[very thick, fill, radius=0.1] (-1,-4) circle; 
\draw[very thick, fill, radius=0.1] (-2,-5) circle node[below] {$s_{k'-1}$};
\draw[very thick, fill, radius=0.1] (0,-5) circle node[below] {$t_{k'-1}$};
\draw[very thick, fill, radius=0.1] (-4,-5) circle; 
\draw[very thick, fill, radius=0.1] (-5,-6) circle node[below] {$s_{k'}$};
\draw[very thick, fill, radius=0.1] (-3,-6) circle node[below] {$t_{k'}$};
\draw[lijn] (0,0) -- (0,-1);
\draw[lijn] (0,-1) -- (1,-2);
\draw[lijn] (1,-2) -- (0,-3);
\draw[lijn] (1,-2) -- (2,-3);
\draw[stippellijn] (0,-1) -- (-2,-3);
\draw[lijn] (-2,-3) -- (-1,-4);
\draw[lijn] (-1,-4) -- (-2,-5);
\draw[lijn] (-1,-4) -- (0,-5);
\draw[lijn] (-2,-3) -- (-4,-5);
\draw[lijn] (-4,-5) -- (-3,-6);
\draw[lijn] (-4,-5) -- (-5,-6);
\end{tikzpicture}
\caption{\label{fig:treehard} \emph{The tree $T$ used in the reduction of {\ndp} to {\utc}.}}
\end{figure}

\begin{figure}
\centering
\begin{tikzpicture}[x=.7cm, y=.7cm]
\tikzset{lijn/.style={ultra thick}}
\tikzset{stippellijn/.style={ultra thick, dotted}}
\draw[very thick, fill, radius=0.1] (0,0) circle node[right] {$\rho$};
\draw[very thick, fill, radius=0.1] (0,-1) circle; 
\draw[very thick, fill, radius=0.1] (1,-2) circle; 
\draw[very thick, fill, radius=0.1] (0,-3) circle node[below] {$s_1$};
\draw[very thick, fill, radius=0.1] (-2,-3) circle; 
\draw[very thick, fill, radius=0.1] (-1,-4) circle; 
\draw[very thick, fill, radius=0.1] (-2,-5) circle node[below] {$s_{k'-1}$};
\draw[very thick, fill, radius=0.1] (-4,-5) circle; 
\draw[very thick, fill, radius=0.1] (-5,-6) circle node[below] {$s_{k'}$};

\draw (0,-8) node[ellipse, thick, minimum height=3cm,minimum width=8cm,draw, fill=lightgray] {$G'$};

\draw[lijn] (-3.63,-9.63) -- (-5,-11);
\draw[very thick, fill, radius=0.1] (-5,-11) circle node[below] {$t_{k'}$};

\draw[lijn] (-1,-10.1) -- (-1,-11);
\draw[very thick, fill, radius=0.1] (-1,-11) circle node[below] {$t_{k'-1}$};

\node[draw=none,fill=none] at (2,-11) {$\ldots$};

\draw[lijn] (3.63,-9.63) -- (5,-11);
\draw[very thick, fill, radius=0.1] (5,-11) circle node[below] {$t_{1}$};

\draw[lijn] (0,0) -- (0,-1);
\draw[lijn] (0,-1) -- (1,-2);
\draw[lijn] (1,-2) -- (0,-3);
\draw[lijn] (1,-2) -- (3.22,-6.25); 

\draw[stippellijn] (0,-1) -- (-2,-3);
\draw[lijn] (-2,-3) -- (-1,-4);
\draw[lijn] (-1,-4) -- (-2,-5);
\draw[lijn] (-1,-4) -- (0,-5.86); 
\draw[lijn] (-2,-3) -- (-4,-5);
\draw[lijn] (-4,-5) -- (-3.2,-6.25); 
\draw[lijn] (-4,-5) -- (-5,-6);
\end{tikzpicture}
\caption{\label{fig:networkhard} \emph{The network $N$ used in the reduction of {\ndp} to {\utc}.}}
\end{figure}
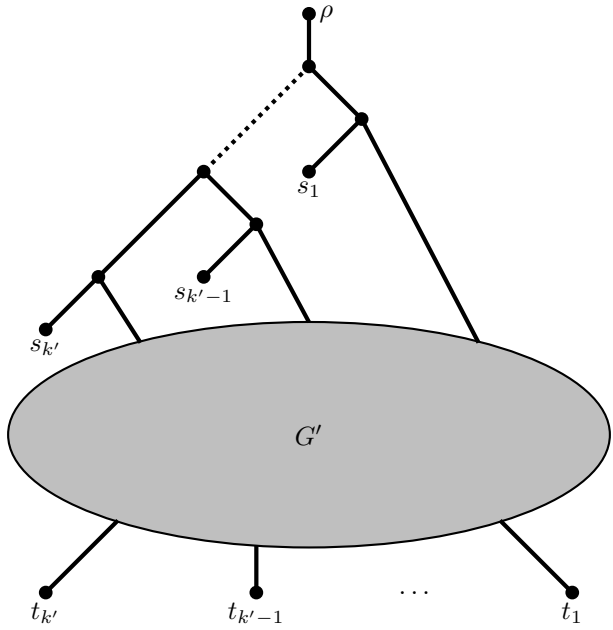

Now, we reduce $(G',W')$ to \textsc{UTC}. Let $T$ be the  unrooted binary phylogenetic tree on $2k'+1$ taxa $X = \{\rho, s_1, t_1, \ldots, s_{k'}, t_{k'} \}$ shown
in Figure \ref{fig:treehard}. The unrooted binary phylogenetic network $N$, also on $X$, is constructed from $(G',W')$ as shown in Figure \ref{fig:networkhard}. It can easily
be verified that $N$ displays $T$ if and only if $(G',W')$ is a YES instance to {\ndp}.
\end{proof}

\pagebreak
\subsection{Unrooted Tree Containment (UTC) parameterized by reticulation number}
\label{subsec:utcfpt}

Recall that the input to {\utc} is an unrooted binary phylogenetic network $N = (V,E)$ and an unrooted binary phylogenetic tree $T$, both  on $X$. In this section we use $n :=
|V|$ to denote the size of the input to {\utc}, which is justified by noticing that $|X| \leq |V|$  and $|V| - 1 \leq |E| \leq (3/2)|V|$ and that $|V|$ can be arbitrarily
larger than $|X|$.

We prove that {\utc} is fixed parameter tractable (FPT) in parameter $r(N)$.  First, we give a linear kernel: we show how to transform in $\text{poly}(n)$ time the instance
$(N,T)$ of ${\utc}$ into a new instance $(N'', T'')$ on $X''$ such that $r(N'') \leq r(N)$, the size of the instance $(N'',T'')$ is at most a linear function of $r(N'')$, and
$N''$ displays $T''$ if and only if $N$ displays $T$. Second, we describe a simple bounded-search branching algorithm to answer {\utc}, and combining these two results
establishes the FPT result. (Note that the second result alone is actually sufficient to establish the FPT result, and could be used without first applying the kernelization
procedure, but the kernelization is of independent interest and can contribute to further speed-up in practice).

We start with some necessary definitions,  which we give in a form somewhat more general than required in this section, so that we can use them in later sections.  Let
$\mathcal{N}$ be a collection of binary, unrooted, phylogenetic networks (all on $X$) and $N_i \in \mathcal{N}$. A  subtree $T$ is called a \textit{pendant} subtree of $N_i$ if
there exists an edge $e$ the deletion of which detaches $T$ from $N_i$. A subtree $T$, which induces a subset of taxa $X' \subset X$, is called \textit{common pendant subtree
of} $\mathcal{N}$ if the following two conditions hold:

\begin{enumerate}
\item $T$ is a pendant subtree  on each $N_i \in \mathcal{N}$ and $N_i | X' = N_j|X'$ for each pair of two distinct networks $N_i, N_j \in \mathcal{N}$. Here by $N_i|X'$ we
    mean the tree which is obtained from $N_i$ by taking the minimum spanning tree on $X'$ and then suppressing any resulting node of degree 2.

\item Let $e_i$ be the edge of network $N_i \in \mathcal{N}$ the deletion of which detaches $T$ from $N_i$ and let $v_i \in e_i$ be the endpoint of $e_i$ ``closest" to the
    taxon set $X'$. Let's say that we root each $N_i|X'$ at $v_i$, thus inducing a rooted binary phylogenetic tree $(N_i|X')^{\rho}$ on $X'$. We require that, for each
    distinct pair of networks $N_i, N_j \in \mathcal{N}$, $(N_i|X')^{\rho} = (N_j|X')^{\rho}$.
\end{enumerate}
The second condition formalizes the idea that the point of contact (root location) of the tree should explicitly be taken into account when determining whether a pendant
subtree is common. (This is consistent with the definition of common pendant subtree elsewhere in the literature).

The above definition is the basis of the following polynomial-time reduction rule which we will use extensively.

\begin{description}
\item[Common Pendant Subtree (CPS) reduction:] Find a maximal common pendant subtree in $\mathcal{N}$. Let $T$ be such a common subtree with at least two taxa and let $X_T$
    be its set of taxa. Clip $T$ from each $N_i \in \mathcal{T}$. Attach a single label $x \notin X$ in place of $T$ on
    each $N_i$. Set $X := (X \setminus X_T) \cup \{x\}$.
\end{description}

\noindent Note that, if all the networks in $\mathcal{N}$ are copies of the same, identical unrooted binary tree on $X$, we adopt the convention that iterated application of
the CPS reduction reduces all the trees to a single taxon $\{x\}$.

Next, let $N$ be an unrooted binary network on $X$. For each taxon $x_i \in X$, let $p_i$ be the unique parent of $x_i$ in $N$. Let $C = (x_1, x_2, \dots, x_t)$ be an ordered sequence
of taxa and let $P = (p_1, p_2, \dots, p_t)$ be the ordered sequence corresponding to their parents. We allow $p_1 = p_2$ or $p_{t-1} = p_t$. If
$P$ is a \textit{path} in $N$  then $C$ is called a \textit{chain} of length $t$.  A chain $C$ is a \textit{common chain} of $\mathcal{N}$ if $C$ is a chain in each $N_i \in
\mathcal{N}$. This brings us to our second polynomial-time reduction rule.

\begin{description}
\item[Common $d$-Chain ($d$-CC) reduction:] Find a maximal common chain $C = (x_1, \ldots, x_t)$ of $\mathcal{N}$ where $t > d$. Delete from each $N_i \in \mathcal{N}$ all
    leaf labels $x_{d+1}, \dots, x_t$, suppress any resulting node of  degree $2$ and delete any resulting unlabelled leaves of degree 1.
\end{description}

\begin{lem}
\label{lem:utckernel}
There exists a kernelization for {\utc} producing an instance $(N'',T'')$ with at most $\max( 6k, 4 )$ taxa and $\max( 15k, 5)$ edges, where~$k = r(N'') \leq r(N)$.\end{lem}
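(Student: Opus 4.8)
The plan is to apply the two reduction rules, CPS and $d$-CC (with $d$ chosen appropriately, presumably $d = 3$), exhaustively to the input instance $(N,T)$, viewing $\mathcal{N} = \{N, T\}$ as a collection of two unrooted binary networks on $X$ (with $T$ a network of reticulation number $0$). First I would verify the two rules are \emph{safe}: clipping a common pendant subtree and replacing it by a single new taxon preserves the answer to {\utc} (because any image of $T$ in $N$ must route that whole pendant subtree through the clip edge, since it is pendant in \emph{both} and hangs off in the same rooted way), and truncating a common chain down to $d$ leaves likewise preserves the answer (a standard argument: the suppressed chain taxa can always be threaded along the retained portion of the chain in any embedding, and conversely). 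I would also check that neither rule increases $r(N)$ — CPS removes a subtree and a single edge and re-attaches one leaf, $d$-CC only deletes leaves and suppresses degree-2 nodes — so $r(N'') \le r(N)$ throughout, and the reductions run in $\mathrm{poly}(n)$ time and terminate (each strictly decreases $|X|$ or the total size).

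The core of the argument is the size bound once the instance is fully reduced, i.e. once no common pendant subtree on $\ge 2$ taxa and no common chain of length $> d$ remains. Here I would set $k = r(N'')$; note that if $k = 0$ then $N''$ is a tree and the fully reduced instance has been contracted to a constant number of taxa (the $\max(\cdot,4)$ and $\max(\cdot,5)$ terms), so assume $k \ge 1$. The standard approach: root-free analysis is awkward, so I would argue combinatorially about where taxa can sit in $N''$. Every taxon's parent $p_i$ has degree $3$; after CPS exhaustion, no two taxa hang off a "cherry-like" common structure, and after $d$-CC exhaustion, any maximal path of taxon-parents common to both $N''$ and $T''$ has length $\le d$. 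The blow-up in the number of such maximal common chains (and isolated common taxa) is controlled by the number of "branching" or "reticulation-witnessing" vertices, which is $O(k)$: an unrooted binary network with reticulation number $k$ has $O(k)$ vertices that are not on a long induced path, and the places where the embedding of $T''$ into $N''$ can "diverge" are likewise $O(k)$. Counting, one gets at most $6k$ taxa, hence at most $\tfrac32 \cdot (\text{number of vertices}) = O(k)$ edges, with the explicit constants $15k$ and $5$ coming from tracking the bookkeeping carefully.

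The main obstacle, as I see it, is exactly this last counting step — making precise \emph{why} the number of maximal common chains is $O(k)$ in the unrooted setting. In the rooted case one has a clean handle via the root and indegree-$2$ nodes; here one must argue that if $N''$ displays $T''$ then an image $T'$ of $T''$ in $N''$ decomposes $N''$ into the image edges plus $k$ (or fewer) "off-image" stretches, and that two taxa adjacent in a common chain of $T''$ but whose parents are far apart in $N''$ force structure that can be charged against these $k$ stretches; if $N''$ does \emph{not} display $T''$ the kernel can be declared a trivial NO-instance, but one still needs the size bound to hold so that the branching algorithm's running time is controlled — so really the bound must be proved structurally from exhaustion of the rules, independent of the yes/no answer. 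I would handle this by a charging argument: orient/decompose $N''$ via a spanning tree, identify the $\le k$ "extra" edges, observe that outside a bounded neighbourhood of the endpoints of these extra edges and outside the (few) high-degree-branching regions the network looks locally like a path, and that on such path-like stretches the common-chain exhaustion caps the number of attached taxa; summing over the $O(k)$ special regions yields the $\max(6k,4)$ taxon bound, and the edge bound follows since an unrooted binary phylogenetic network on $m$ taxa with reticulation number $k$ has $2m + 2k - 2$ edges (giving $\le 12k + 2k - 2 \le 15k$ for $k\ge 1$, and $\le 5$ in the degenerate case).
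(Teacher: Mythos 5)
Your overall strategy (CPS plus a common-chain truncation, then a counting argument) matches the paper's opening moves, but there are two genuine gaps, and you have correctly located one of them yourself.

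First, the safety of the $3$-CC rule is not the ``standard argument'' you invoke. In the unrooted setting the orientation of a chain is ambiguous: the image of a truncated chain $(x_1,x_2,x_3)$ in $N''$ may leave the chain's path via the edge at the $x_1$-end, wander through the network, and re-enter at the $x_3$-end, so that naively re-inserting $x_4,\dots,x_t$ ``next to $x_3$'' does not yield an image of the original tree. The paper handles this with a case analysis on which of the four interior edges $e_1,e_{12},e_{23},e_3$ lie on the image, including an explicit re-routing of the image in the problematic subcases; it also exhibits a counterexample showing that truncating to length $2$ is \emph{unsafe}, which shows that some such argument is unavoidable and that the choice $d=3$ is not merely ``presumable''.

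Second, and more seriously, CPS and $3$-CC alone do not give the stated constants, and your proposed charging argument does not close the gap you flag as the ``main obstacle''. The $3$-CC rule only truncates chains common to $N''$ and $T''$; after exhaustion the \emph{network} can still carry chains of up to six taxa whose members are arranged differently in the tree. The paper introduces a third reduction, the Network Chain (NC) rule, which caps every chain of the network itself at two taxa via a case analysis on how the tree arranges those taxa (returning NO for $t\ge 7$, deleting a specific edge otherwise). With that in hand the count is clean and needs no charging: delete all taxa from $N''$ and suppress/delete to obtain a $3$-regular multigraph $N^*$; from $|E''|=k+|V''|-1$ and $3$-regularity one gets $|V^*|=2k-2$ and $|E^*|=3(k-1)$, and since $N''$ is recovered by replacing each edge of $N^*$ by a chain of at most two taxa, $|X''|\le 2\cdot\max(1,3(k-1))$ and the edge count is at most $5\cdot\max(1,3(k-1))$. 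Your spanning-tree decomposition could plausibly be made to yield \emph{some} linear bound (the paper itself remarks that CPS and $3$-CC alone give a linear kernel), but it would not yield $6k$ and $15k$, and as written it is a plan rather than a proof. You should also add the preprocessing step that removes taxon-free components hanging off cut-edges, which the final counting implicitly relies on.
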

\begin{proof}

If, during the kernelization procedure, we ever discover that the answer  to {\utc}  is definitely NO (respectively, YES) then we simply output a trivial NO (YES) instance as $(N'',T'')$ e.g.  letting $N''$ and $T''$ be two topologically distinct (identical) unrooted phylogenetic trees on 4 taxa and 5 edges. We shall henceforth use this implicitly; this is where the ``4'' and ``5'' terms come from in the statement of the lemma. Note that if $|X| \leq 3$, the answer is trivially YES, so we henceforth assume $|X| \geq 4$.

We begin with some trivial pre-processing. If $N$ contains a cut-edge $e$ such that one of the two connected components obtained by deleting $e$ contains no taxa, we delete $e$
and this component from $N$ and suppress the degree 2 node created by deletion of $e$. (This is safe, i.e. does not alter the YES/NO status of the answer to {\utc} because the
image of $T$ in $N$ can never enter such a component). We repeat this step until it no longer holds. Let $N'$ be the resulting network. If $N'$ and $T$ are both trees, and are
topologically distinct (respectively, identical) the answer is definitely NO (YES).  Hence, we assume that $N'$ is not a tree.

Next, we apply the \textbf{Common Pendant Subtree (CPS) reduction} to $\{N',T\}$ until it can no longer be applied.  It is easy to see that applying this reduction is safe.
This is because the image of the common pendant subtree, and the common pendant subtree itself, are necessarily identical in $N'$. Gently abusing notation, let $N'$ be the
resulting network and $T'$ the resulting tree. Observe that at this stage $N'$ potentially still contains pendant subtrees (with 2 or more taxa). This occurs if the pendant
subtree has no common counterpart in $T'$. However, if this happens the answer is definitely NO. Therefore, we can henceforth assume that $N'$ contains no pendant subtrees
(with 2 or more taxa).

The next step is to apply the \textbf{Common $3$-Chain ($3$-CC) reduction} repeatedly to $\{N',T'\}$ until it can no longer be applied. This has the effect of clipping all
common chains on 4 or more taxa to length 3. (The fact that we can clip common chains to constant length is the reason we obtain a linear kernel). Let $(N'', T'')$ be the
instance obtained after a single application of the common chain reduction rule. To establish correctness it is sufficient to show that $(N'',T'')$ is a YES instance if and
only if $(N',T')$ is a YES instance.

It is easy to see that if $(N', T')$ is a YES instance then $(N'', T'')$ is a YES instance.  This is because, if $N'$ contains an image of $T'$, then an image of $T''$ (in
$N''$) can be obtained from the image of $T'$ simply by disregarding the surplus taxa deleted from the chain.

The other direction is somewhat more subtle. Observe that, prior to the chain reduction, the common chain $C' = (x_1, x_2, \ldots, x_t)$, $t \geq 4$, was not pendant in $N'$
(because $N'$ contained no pendant subtrees). Hence, the clipped chain $C'' = (x_1, x_2, x_3)$  is not pendant in $N''$. Let $e_1, e_{12}, e_{23}, e_{3}$ be the 4 interior
edges of $N''$ shown in Figure \ref{fig:greatfigure1}.

\begin{figure}[H]\centering\begin{tikzpicture}[scale=1]

\tikzset{lijn/.style={ultra thick}}
\draw[lijn] (0,0) -- (4,0);
\draw (0.5,0.3) node {$e_1$};
\draw (1.5,0.3) node {$e_{12}$};
\draw (2.5,0.3) node {$e_{23}$};
\draw (3.5,0.3) node {$e_3$};
\draw[very thick, fill, radius=0.1] (1,0) circle;
\draw[very thick, fill, radius=0.1] (1,-1) circle;
\draw (1,-1.3) node {$x_1$};
\draw[lijn] (1,0) -- (1,-1);
\draw[very thick, fill, radius=0.1] (2,0) circle;
\draw[very thick, fill, radius=0.1] (2,-1) circle;
\draw (2,-1.3) node {$x_2$};
\draw[lijn] (2,0) -- (2,-1);
\draw[very thick, fill, radius=0.1] (3,0) circle;
\draw[very thick, fill, radius=0.1] (3,-1) circle;
\draw (3,-1.3) node {$x_3$};
\draw[lijn] (3,0) -- (3,-1);
\end{tikzpicture}
\caption{\emph{The chain $C''$ in $N''$.}} \label{fig:greatfigure1}
\end{figure}
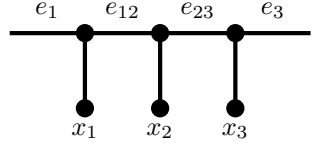

Now, suppose $N''$ displays $T''$; we will prove that $N'$ displays $T'$.  Fix some image of $T''$ inside $N''$. We distinguish two main cases. Note that $C'$ is pendant in $T'$ if and only if $C''$ is pendant in $T''$.\\
\\
\noindent \emph{Case 1: $C''$ is \emph{not} pendant in $T''$.} Both $e_1$ and $e_3$ must be on the image of $T''$ in $N''$, because otherwise the image of the chain $C''$ is
pendant, a contradiction. If both $e_{12}$ and $e_{23}$ are also on the image, then the chain $C''$ and its image in $N''$ are identical. In particular, there is no ambiguity
about the orientation of the chain, so reintroducing the clipped taxa $(x_4, \ldots, x_t)$ into the image of $T''$ (next to $x_3$) yields an image of $T'$ in $N'$. The only
remaining subcase is that, in addition to both $e_1$ and $e_3$, exactly one of $\{e_{12}, e_{23}\}$ is on the image. Without loss of generality let this be $e_{12}$. However,
this is not possible, because it would mean that $\{x_1, x_2\}$ are pendant in the image of $C''$, and
this cannot be an image of $T''$ because $\{x_1, x_2\}$ are not pendant in $T''$.\\
\\
\emph{Case 2: $C''$ \emph{is} pendant in $T''$.} There are two subcases to consider.
\begin{itemize}
\item In the first subcase, $x_1$ and $x_2$ share a parent in $T''$. (That is, the chain is oriented \emph{towards} the rest of the tree). In such a situation both $e_{12}$ and $e_{3}$ must be on the image of $T''$. (If this was not the
case, $\{x_2, x_3\}$ would be pendant in the image of $C''$, but this is not possible because they are not pendant in
$T''$.)  Now, if $e_{23}$ is on the image (irrespective of whether $e_1$ is on the image) then, as in the earlier
case, reintroducing the clipped taxa $(x_4, \ldots, x_t)$ into the image of $T''$ (next to $x_3$) yields an image of $T'$ in $N'$. The main subtlety is if $e_{23}$ is not on the image, and (necessarily) $e_1$ is. This occurs if the image of $C''$
exits via $e_1$, follows some simple path $P$ through another part of the network, and re-enters at $e_3$. However,
note that, within the image, the path $P$ contains exactly one node of degree 3 - which is the image of the parent of $x_3$ -  and for the rest only degree 2 nodes. This means that we can manipulate the image of $T''$ as follows: put $e_{23}$ in the image, remove $e_1$ from the image, and then tidy up the image in the usual sense (i.e. repeatedly deleting unlabelled nodes of degree 1). This is a new, valid image of $T''$, and puts us back in the situation when $e_{23}$ \emph{is} on the image, so we are done.
\item In the second subcase, $x_2$ and $x_3$ share a parent in $T''$. (That is, the chain is oriented \emph{away} from the rest of the tree). Observe that $e_1$ and $e_{23}$ must be on the image, because otherwise $\{x_1, x_2\}$ is pendant
in the chain image but not in $T''$. If $e_{12}$ is in the image (irrespective of whether $e_3$ is in the image), re-introducing the clipped taxa $(x_4, \ldots, x_t)$ to the right of $x_3$ yields an image of $T'$ in $N'$. Again, there is one subtle situation, and that is when $e_{12}$ is not on the image, but $e_3$ is. Just as before this occurs if the image of $C''$
exits via $e_1$, follows some simple path $P$ through another part of the network, and re-enters at $e_3$. The unique
node on $P$ of degree 3 is the image of the parent of $x_1$ (and all other nodes on $P$ are degree 2). Hence, if we put $e_{12}$ into the image, remove $e_3$ from the image, and tidy the image up, we obtain a new valid image of $T''$ and we are back in the case when $e_{12}$ is in the image.
\end{itemize}

Thus,  we have established that if $N''$ displays $T''$, then $N'$ displays $T'$. Hence, an application of the 3-CC chain reduction is always safe.


Although it can be shown that the CPS and 3-CC reduction rules  produce a linear-size kernel, we now add a third reduction rule which helps to further reduce the size of the
kernel and (more importantly) the value of the parameter (i.e., the reticulation number). Assume that none of the previous reduction rules are applicable.

\medskip
\noindent \textbf{Network chain (NC) reduction.}  If the network contains a chain $(x_1,\ldots ,x_t)$ with~$t\geq 3$ then proceed as follows. Let~$e_{i,i+1}$ be the edge
connecting the parents of~$x_i$ and~$x_{i+1}$.  Let~$e_1$ be the edge incident to the parent of~$x_1$ that is not~$e_{12}$ and not incident to~$x_1$. Let~$e_t$ be the edge
incident to the parent of~$x_t$ that is not~$e_{t-1,t}$ and not incident to~$x_t$. (Note that all these edges exist, because the network does not contain any pendant subtrees,
and thus no pendant chains.)

\begin{enumerate}
\item If $t\geq 7$ then return a trivial NO instance.
\item If $t=6$ then delete~$e_{34}$.
\item If $t=5$, do the following. If the tree contains a chain $(x_1,x_2,x_3)$, delete~$e_{34}$. Otherwise, delete~$e_{23}$.
\item If $t=4$, do the following. If the tree contains a chain  $(x_1,x_2,x_3)$, delete~$e_{34}$. If it contains a chain $(x_2,x_3,x_4)$, delete~$e_{12}$. Otherwise,
    delete~$e_{23}$.
\item If~$t=3$ and the tree has a pendant subtree on~$\{x_1,x_2,x_3\}$, do the following. If~$x_1$ and~$x_2$ have a common parent in the tree, delete~$e_1$. Otherwise,
    delete~$e_3$.
\item Otherwise, if~$t=3$ and the tree has a pendant subtree on~$\{x_1,x_2\}$, delete~$e_{23}$.
\item Otherwise, if~$t=3$ and the tree has a pendant subtree on~$\{x_2,x_3\}$, delete~$e_{12}$.
\item Otherwise, if~$t=3$ and the tree has a chain~$(x_1,x_2,x_3)$, delete~$x_3$.
\end{enumerate}
In all cases, we suppress any resulting degree-2 nodes.

We now show that the network chain reduction (\textbf{NC}) rule is safe.  Suppose that the network displays the tree. Then the chain $(x_1,\ldots ,x_t)$ of the network is
either also a chain of the tree, or there exists some~$1\leq i\leq t-1$ such that the tree has pendant chains on $\{x_1,\ldots ,x_i\}$ and on $\{x_{i+1},\ldots ,x_t\}$. We now
discuss each case of the network chain reduction separately.

\begin{enumerate}
\item In this case it follows that there is a common chain of length at least four, which is not possible since we assumed that the 3-CC common chain reduction is not applicable.
\item This is only possible if $(x_1,x_2,x_3)$ and $(x_4,x_5,x_6)$ are pendant chains of the tree.  Hence, $e_{34}$ is not used by any image of the tree in the network and
    can be deleted.
\item If the tree contains a chain $(x_1,x_2,x_3)$, then it must be pendant.  Hence,~$e_{34}$ can be deleted. Otherwise, $(x_3,x_4,x_5)$ must be a pendant chain of the tree
    and~$e_{23}$ can be deleted.
\item Similar to the previous case.  If neither $(x_1,x_2,x_3)$ nor $(x_2,x_3,x_4)$ is a pendant chain of the tree, then $(x_1,x_2)$ and $(x_3,x_4)$ must both be pendant
    chains of the tree, in which case $e_{23}$ can be deleted.
\item[5-7.] Similar to the previous cases.
\item[8.] In this case, $(x_1,x_2,x_3)$ is a chain of the tree that is not pendant (since otherwise we would be in one of the previous cases).  The image of the tree in the
    network must then use all of~$e_1,e_{12},e_{23},e_3$. Now we delete~$x_3$ and suppress the resulting degree-2 node. Hence the reduced network has a chain $(x_1,x_2)$ with
    edges~$e_1,e_{12},e_2$ defined as in the network chain reduction rule. To see that this reduction is safe, assume that the reduced tree is displayed by the reduced
    network. Then the embedding of the tree in the network has to use~$e_1$ and~$e_2$. It does not necessarily use~$e_{12}$ but if it does not it is easy to adapt the image
    such that it does use~$e_{12}$. Hence, the chain~$(x_1,x_2)$ can be replaced by $(x_1,x_2,x_3)$ and it follows that the original tree is displayed by the original
    network.
\end{enumerate}

Let $(N'',T'')$ be an instance obtained by applying the CPS, 3-CC and NC reduction rules exhaustively until none applies. Clearly, the process by which $(N'',T'')$ is obtained
from the original $(N,T)$ can be completed in polynomial time, since all pre-processing steps delete at least one node or edge from the network. It is easy to verify that, by
construction, $r(N'') \leq r(N)$. Hence, to complete the kernelization it remains only to show that the size of the instance $(N'',T'')$ is at most a linear function of
$r(N'')$, where for brevity we let $k = r(N'')$.  To see this, recall firstly that $N''$ has no pendant subtrees. Let $N'' = (V'',E'')$. Suppose we delete all taxa in $N''$ and
then repeatedly suppress nodes of degree 2, and delete nodes of degree 1, until neither of these operations can be applied anymore. For $k\geq 2$, this creates a $3$-regular
graph $N^{*}$ with nodes $V^{*}$ and edges $E^{*}$, that potentially contains multi-edges and loops. Notice that in each deletion of a leaf and each suppression of a node with
degree 2, we diminish both the number of nodes and the number of edges by 1. Since we started out with $|E''|=k+|V''|-1$ we still have $|E^{*}|=k+|V^{*}|-1$. Moreover, because
of $3$-regularity, $|E^{*}|=3|V^{*}|/2$. Combining yields $|V^{*}|=2k-2$ and therefore $|E^{*}|=3(k-1)$. (For $k=1$, $N^{*}$ contains no nodes and is strictly speaking not a
graph: in this case we define $N^{*}$ to be a single node with a loop). Observe that $N''$ can be obtained from $N^{*}$ by replacing each edge with a chain of taxa: this
operation is sufficient because $N''$ had no pendant subtrees. Moreover, each such chain contains at most two leaves since otherwise the network chain reduction rule would be
applicable. This means that $|X''|$ is at most $2 \cdot \max(1, 3(k-1))$, and the number of edges in $N''$ is at most $5 \cdot \max(1, 3(k-1))$.
\end{proof}


We observe that simply reducing common chains to length 2, i.e. applying the 2-CC reduction rule,  is \emph{not} safe, as the following example shows. Suppose $N$ consists of a
single cycle with taxa $a,b,c,d,e,f$ in that (circular) order. Let $T$ be a caterpillar tree with taxa $a,b,c,f,e,d$ in that order. $N$ does not display $T$. However, if the
common chain $(a,b,c)$ is clipped to $(a,b)$ - or to $(b,c)$ - the resulting network $N''$ \emph{does} display $T''$. A symmetrical argument also holds for the common chain
$(f,e,d)$.

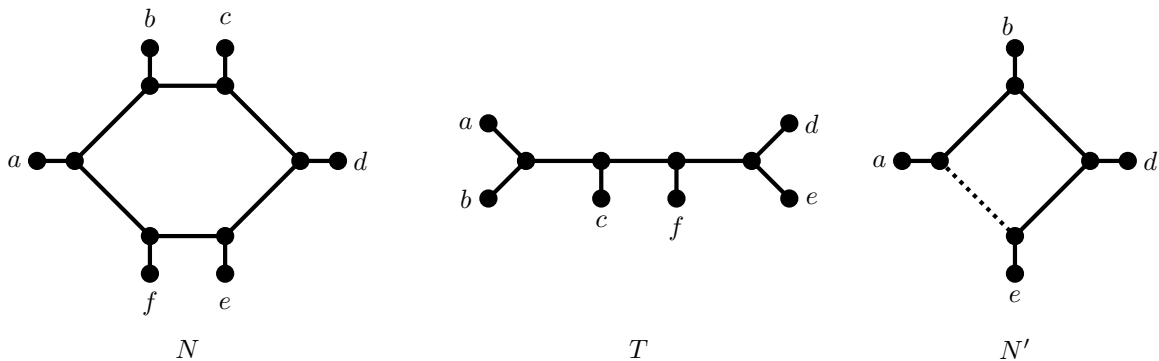
\begin{figure}[H]\centering\begin{tikzpicture}[scale=1]
\tikzset{lijn/.style={ultra thick}}
\tikzset{stippellijn/.style={ultra thick, dotted}}
\draw[very thick, fill, radius=0.1] (1,0) circle;
\draw[very thick, fill, radius=0.1] (2,0) circle;
\draw[very thick, fill, radius=0.1] (0,1) circle;
\draw[very thick, fill, radius=0.1] (1,2) circle;
\draw[very thick, fill, radius=0.1] (2,2) circle;
\draw[very thick, fill, radius=0.1] (3,1) circle;
\draw[lijn] (1,0) -- (2,0);
\draw[lijn] (2,0) -- (3,1);
\draw[lijn] (3,1) -- (2,2);
\draw[lijn] (2,2) -- (1,2);
\draw[lijn] (1,2) -- (0,1);
\draw[lijn] (0,1) -- (1,0);
\draw[very thick, fill, radius=0.1] (-.5,1) circle;
\draw[lijn] (-.5,1) -- (0,1);
\draw (-.8,1) node {$a$};
\draw[very thick, fill, radius=0.1] (1,2.5) circle;
\draw[lijn] (1,2.5) -- (1,2);
\draw (1,2.9) node {$b$};
\draw[very thick, fill, radius=0.1] (2,2.5) circle;
\draw[lijn] (2,2.5) -- (2,2);
\draw (2,2.9) node {$c$};
\draw[very thick, fill, radius=0.1] (1,-.5) circle;
\draw[lijn] (1,-.5) -- (1,0);
\draw (1,-.9) node {$f$};
\draw[very thick, fill, radius=0.1] (2,-.5) circle;
\draw[lijn] (2,-.5) -- (2,0);
\draw (2,-.9) node {$e$};
\draw[very thick, fill, radius=0.1] (3.5,1) circle;
\draw[lijn] (3.5,1) -- (3,1);
\draw (3.8,1) node {$d$};
\draw (1.5,-1.5) node {$N$};
\draw[very thick, fill, radius=0.1] (5.5,1.5) circle;
\draw[lijn] (5.5,1.5) -- (6,1);
\draw (5.2,1.5) node {$a$};
\draw[very thick, fill, radius=0.1] (5.5,0.5) circle;
\draw[lijn] (5.5,0.5) -- (6,1);
\draw (5.2,0.5) node {$b$};
\draw[very thick, fill, radius=0.1] (6,1) circle;
\draw[lijn] (6,1) -- (7,1);
\draw[very thick, fill, radius=0.1] (7,1) circle;
\draw[lijn] (7,1) -- (8,1);
\draw[very thick, fill, radius=0.1] (8,1) circle;
\draw[lijn] (8,1) -- (9,1);
\draw[very thick, fill, radius=0.1] (9,1) circle;
\draw[very thick, fill, radius=0.1] (9.5,0.5) circle;
\draw (9.8,0.5) node {$e$};
\draw[lijn] (9,1) -- (9.5,0.5);
\draw[very thick, fill, radius=0.1] (9.5,1.5) circle;
\draw[lijn] (9,1) -- (9.5,1.5);
\draw (9.8,1.5) node {$d$};
\draw[very thick, fill, radius=0.1] (7,.5) circle;
\draw[lijn] (7,1) -- (7,.5);
\draw (7,.2) node {$c$};
\draw[very thick, fill, radius=0.1] (8,.5) circle;
\draw[lijn] (8,1) -- (8,.5);
\draw (8,.1) node {$f$};
\draw (7.5,-1.5) node {$T$};
\draw[very thick, fill, radius=0.1] (11.5,1) circle;
\draw[very thick, fill, radius=0.1] (11,1) circle;
\draw[lijn] (11.5,1) -- (11,1);
\draw (10.7,1) node {$a$};
\draw[very thick, fill, radius=0.1] (12.5,2) circle;
\draw[very thick, fill, radius=0.1] (12.5,2.5) circle;
\draw[lijn] (12.5,2) -- (12.5,2.5);
\draw (12.4,2.8) node {$b$};
\draw[lijn] (11.5,1) -- (12.5,2);
\draw[very thick, fill, radius=0.1] (13.5,1) circle;
\draw[very thick, fill, radius=0.1] (14,1) circle;
\draw[lijn] (13.5,1) -- (14,1);
\draw (14.3,1) node {$d$};
\draw[lijn] (12.5,2) -- (13.5,1);
\draw[very thick, fill, radius=0.1] (12.5,0) circle;
\draw[very thick, fill, radius=0.1] (12.5,-.5) circle;
\draw[lijn] (12.5,0) -- (12.5,-.5);
\draw (12.5,-.8) node {$e$};
\draw[lijn] (13.5,1) -- (12.5,0);
\draw[stippellijn] (11.5,1) -- (12.5,0);
\draw (12.5,-1.5) node {$N'$};
\end{tikzpicture}
\caption{\emph{Example showing that it is not safe to reduce chains to length~2. The shown network~$N$  does not display the given tree~$T$. However, if the chains $(a,b,c)$ and $(d,e,f)$ are reduced to length~2, then the reduced network~$N'$ does display the reduced tree~$T'$ (by deleting the dotted edge).}}
\label{fig:greatfigure}\end{figure}

The proof of the  FPT result follows by applying a simple bounded-search branching algorithm to the kernelized instance. Note that, as mentioned earlier, this algorithm can be
applied independently of the kernelization.

\begin{thm}\label{thm:UTCFPT}
Let $(N,T)$ be an input to \textsc{UTC}, where $N=(V,E)$. There exists an $O(4^kn^2)$-time algorithm for UTC, where~$k= r(N)$ and~$n = |V|$.\end{thm}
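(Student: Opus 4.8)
The plan is to run a bounded-search branching algorithm directly on the input $(N,T)$; the kernelization of Lemma~\ref{lem:utckernel} speeds things up in practice but is not needed for the bound. Throughout, the algorithm maintains that $N$ is binary (no degree-2 vertices) and contains no cut-edge one side of which is taxon-free; the latter is restored by exactly the preprocessing used in Lemma~\ref{lem:utckernel}, and it is safe because any image of $T$ has leaf set precisely $X$ and hence cannot enter a taxon-free pendant part. If $|X|\le 3$ we output YES; if $N$ is a tree we output YES iff $N$ and $T$ are isomorphic as $X$-labelled trees, which is tested in linear time (e.g.\ by comparing their split systems). So assume $|X|\ge 4$ and $k = r(N) \ge 1$.

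Since $T$ is a binary tree on at least four leaves, it has a cherry $\{a,b\}$ (two leaves sharing a neighbour $q$ in $T$). Let $p_a,p_b$ be the neighbours of $a,b$ in $N$; these are internal, hence degree-3, vertices. If $p_a = p_b$, then $\{a,b\}$ is also a cherry of $N$; we contract it in both $N$ and $T$ to a single new leaf $x$, which is the CPS reduction of Lemma~\ref{lem:utckernel} restricted to a two-leaf subtree, is safe for the same reason, leaves $r(N)$ unchanged, and strictly decreases $|V|$, so we recurse. Otherwise $p_a \ne p_b$ and we branch. Write $f_1,f_2$ for the two edges at $p_a$ other than $\{a,p_a\}$, and $g_1,g_2$ for the two edges at $p_b$ other than $\{b,p_b\}$. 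The key claim is that \emph{every} image $T'$ of $T$ in $N$ omits at least one of $f_1,f_2,g_1,g_2$: since $a,b$ are leaves, $p_a$ and $p_b$ both lie on the $a$--$b$ path of $T'$, which under suppression becomes the length-2 path $a,q,b$ of $T$, so this path carries exactly one vertex of degree $3$ in $T'$; if that vertex is $p_a$ then its image is $q$, forcing every interior vertex of the $p_a$--$b$ subpath (in particular $p_b$) to have degree $2$ in $T'$, so one of $g_1,g_2$ is omitted, symmetrically if it is $p_b$; in the remaining case both $p_a$ and $p_b$ have degree $2$ in $T'$, so one of $f_1,f_2$ is omitted. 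We therefore create four branches: delete $f_1$; delete $f_2$; delete $g_1$; delete $g_2$. In each branch we suppress resulting degree-2 vertices and restore the invariants; a branch in which the deleted edge turned out to be a cut-edge with taxa on both sides is discarded immediately (no image can then span $X$), and in every surviving branch the deleted edge lay on a cycle, so $r(N)$ drops by exactly one. Correctness is the assertion that $N$ displays $T$ iff the contraction succeeds or some surviving branch does; this follows from the key claim together with the routine fact that deleting an edge absent from a fixed image, then tidying up, preserves displayability.

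For the running time, each recursive call restores the invariants, performs all available cherry contractions, runs the tree-isomorphism test, and (if it branches) selects a cherry and carries out the four edge deletions with tidy-up; all of this costs $O(n^2)$ (and $O(n)$ with more care, using an efficient bridge-finding routine). The invariant restoration and the cherry contractions only shrink the instance, whereas each of the four branches strictly decreases $k$; hence the search tree is $4$-ary and the parameter strictly decreases along every root-to-leaf path, so it has at most $4^k$ leaves and $O(4^k)$ nodes in total. Multiplying yields the claimed bound $O(4^k n^2)$.

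The step I expect to require the most care is the correctness of a single branch: after, say, deleting $g_1$ and tidying up, the reduced instance must be a YES-instance exactly when $N$ admits an image of $T$ with $g_1$ absent. The nontrivial direction is handled by re-routing a putative image along the degree-2 ``detour'' path created when $p_b$ is suppressed, in the same spirit as the safeness proofs for the chain reductions in Lemma~\ref{lem:utckernel}; this is precisely the point at which the unrooted setting makes the argument more delicate than in the rooted case, where the analogous per-reticulation branching is immediate.
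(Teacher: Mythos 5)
Your proposal is correct and follows essentially the same route as the paper's proof: pick a cherry of $T$, contract it if it is also a cherry of $N$, and otherwise branch four ways on the edges incident to the two leaf-neighbours in $N$, using the observation that exactly one edge leaves the $a$--$b$ path in any image, so each branch deletes an edge and decreases $r(N)$ by one, giving $O(4^k n^2)$. Your treatment is somewhat more careful than the paper's about the degenerate branches (deleted cut-edges with taxa on both sides) and about justifying the safety of a single deletion, but the algorithm and analysis are the same.
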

\begin{proof}
If the network is a tree then the problem can be solved easily in polynomial time by deciding whether or not the network is isomorphic to the input tree. Otherwise, we proceed as follows.

Consider any two taxa~$x,y$ that have a common neighbour in the tree~$T$. If~$x$ and~$y$ also have a common neighbour in~$N$, then we can delete~$y$ from both~$T$ and~$N$ and suppress the resulting degree-2 nodes (see the CPS reduction above).

Otherwise, let~$n_x$ and~$n_y$ be the neighbours of, respectively,~$x$ and~$y$ in the network~$N$. Let~$e_1,e_2$ be the two edges that are incident to~$n_x$ but not to~$x$ and let~$e_3,e_4$ be the two edges that are incident to~$n_y$ but not to~$y$. If~$N$ displays~$T$, then the embedding of~$T$ in~$N$ can contain at most three of these four edges~$e_1,\ldots ,e_4$ (since there is exactly one edge leaving the path between~$x$ and~$y$ in the embedding). Hence, we create four subproblems~$P_1,\ldots ,P_4$. In subproblem~$P_i$, we delete edge~$e_i$ and suppress resulting degree-2 nodes. The parameter (reticulation number) in each subproblem is~$k-1$. Hence, the running time is $O(4^kn^2)$.
\end{proof}

\section{Unrooted hybridization number (UHN) on two trees}
\label{sec:uhn}

In this section we study the unrooted hybridization number problem in case the input consists of two trees $T_1, T_2$ and we show equivalence to a well-known problem that has
been studied before in the literature, namely the \emph{Tree Bisection and Reconnect} problem.

Let $T$ be an unrooted, binary tree on $X$.  A \emph{Tree Bisection and Reconnect} (TBR) move is defined as follows: (1) we delete an edge of $T$ to obtain a forest consisting
of two subtrees $T'$ and $T''$. (2) Then we select two  edges $e_1 \in T', e_2 \in T''$, subdivide these two edges with two new nodes $v_1$ and $v_2$, add an edge from $v_1$ to
$v_2$, and finally suppress all nodes of degree 2. In case either $T'$ or $T''$ are single leaves, then the new edge connecting $T'$ and $T''$ is incident to that node. Let
$T_1, T_2$ be two binary and unrooted trees on the same set of leaf-labels. The TBR-distance from $T_1$ to $T_2$, denoted $d_{TBR}(T_1, T_2)$, is simply the \textit{minimum}
number of TBR moves required to transform $T_1$ into $T_2$.

It is well known that computation of TBR-distance is essentially equivalent to the \textsc{Maximum Agreement Forest (MAF)} problem, which we now define. Given an unrooted,
binary tree on $X$ and $X' \subset X$ we let $T(X')$ denote the minimal subtree that connects all the elements in $X'$. An \emph{agreement forest} of two unrooted binary trees
$T_1, T_2$ on $X$ is a partition of $X$ into non-empty blocks $\{X_1, \ldots, X_k\}$ such that (1) for each $i \neq j$, $T_1(X_i)$ and $T_1(X_j)$ are node-disjoint and
$T_2(X_i)$ and $T_2(X_j)$ are node-disjoint, (2) for each $i$, $T_1|X_i = T_2|X_i$. A \emph{maximum agreement forest} is an agreement forest with a minimum number of
components, and this minimum is denoted $d_{MAF}(T_1,T_2)$. In 2001 it was proven by Allen and Steel that $d_{MAF}(T_1, T_2) = d_{TBR}(T_1, T_2) + 1$ \cite{AllenSteel2001}.

\begin{thm}
Let $T_1, T_2$ be two unrooted binary phylogenetic trees on the same set of taxa $X$. Then $d_{TBR}(T_1, T_2) = h^u(T_1,T_2)$. \label{thm:tbrIsMaf}
\end{thm}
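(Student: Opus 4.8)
The plan is to route everything through \textsc{MAF}: by the Allen--Steel identity $d_{MAF}(T_1,T_2)=d_{TBR}(T_1,T_2)+1$ quoted above, it suffices to prove $h^u(T_1,T_2)=d_{MAF}(T_1,T_2)-1$, and I would establish the two inequalities separately. For $h^u(T_1,T_2)\le d_{TBR}(T_1,T_2)$ I would argue by induction on $d=d_{TBR}(T_1,T_2)$ (alternatively, construct a network directly from a maximum agreement forest by gluing its common blocks according to both $T_1$'s and $T_2$'s ``skeleton'', which gives reticulation number $d_{MAF}-1$). If $d=0$ then $T_1=T_2$ and $N:=T_1$ works. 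If $d\ge1$, pick an unrooted tree $S$ obtained from $T_1$ by a single TBR move with $d_{TBR}(S,T_2)=d-1$; by induction there is a binary unrooted network $N'$ with $r(N')\le d-1$ displaying both $S$ and $T_2$. The move $T_1\to S$, run in reverse, deletes one edge of $S$ and reattaches one of the two resulting subtrees onto an edge of the other; at the network level I would fix an image of $S$ inside $N'$, locate the two image-paths of the two edges of $S$ involved in the reattachment (they lie on opposite sides of the image-path of the deleted edge), subdivide one edge on each of these two paths by new nodes $w_1,w_2$, and add a single new edge $w_1w_2$. The result $N$ is binary, has $r(N)=r(N')+1\le d$, still displays $T_2$ (as $N$ contains a subdivision of $N'$), and displays $T_1$ by re-routing the embedding of $S$ through $w_1w_2$ in place of the deleted edge; a separate easy subcase handles TBR moves where one of the two subtrees is a single leaf (subdivide that leaf's pendant edge and attach there).

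For the reverse inequality $d_{MAF}(T_1,T_2)-1\le h^u(T_1,T_2)$ I would show: if an unrooted binary network $N$ with $r(N)=r$ displays both $T_1$ and $T_2$, then $T_1,T_2$ have an agreement forest with at most $r+1$ components. I expect this to need a mild generalisation --- ``$N$ displays a \emph{forest} $\mathcal{F}=\{F_1,\dots,F_m\}$'', meaning $N$ contains pairwise vertex-disjoint subtrees that are images of the $F_j$ with $\bigsqcup_j X(F_j)=X$, plus the corresponding notion of agreement forest of a tree $T_1$ and a forest $\mathcal{F}$ --- and I would prove by induction on $r$ that if $N$ displays a tree $T_1$ and a forest $\mathcal{F}$ with $|\mathcal{F}|=m$ then $T_1$ and $\mathcal{F}$ admit an agreement forest with at most $r+m$ components. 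Before each step I would discard every vertex and edge of $N$ not on the fixed image of $T_1$ nor on any $F_j$-image; this keeps $N$ connected (the $T_1$-image already spans $X$) and does not increase $r$. The base case $r=0$ is immediate: $N$ is then a tree, its image of $T_1$ is all of $N$, and the partition $\{X(F_1),\dots,X(F_m)\}$ is itself an agreement forest with $m$ components. For $r\ge1$, the trimmed $N$ contains a cycle $C$; since the $T_1$-image is acyclic, $C$ has an edge $e$ off that image, and by trimming $e$ lies on some $F_j$-image, hence on the image-path of exactly one edge $\beta$ of $F_j$. Deleting $e$ and suppressing the two created degree-$2$ nodes yields $N''$ with $r(N'')=r-1$ that still displays $T_1$ and displays the forest $\mathcal{F}''$ obtained from $\mathcal{F}$ by cutting $F_j$ at $\beta$ (so $|\mathcal{F}''|=m+1$). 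Induction gives an agreement forest of $(T_1,\mathcal{F}'')$ with at most $(r-1)+(m+1)=r+m$ components, and since $\mathcal{F}''$ refines $\mathcal{F}$ one checks that such an agreement forest is also one of $(T_1,\mathcal{F})$ (the node-disjointness and restriction conditions only get easier when two blocks lying on opposite sides of the cut $\beta$ are compared). Applying this with $\mathcal{F}=\{T_2\}$ yields an agreement forest of $T_1,T_2$ with at most $r(N)+1$ components, and minimising over $N$ gives $d_{MAF}(T_1,T_2)-1\le h^u(T_1,T_2)$, i.e. $d_{TBR}(T_1,T_2)\le h^u(T_1,T_2)$.

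The main obstacle, I expect, is this reverse inequality --- more precisely, resisting a tempting ``one-shot'' argument. Deleting $r$ edges of $N$ to obtain a spanning tree and then cutting $T_1$ and $T_2$ wherever their embeddings meet these $r$ edges does \emph{not} work, because a single deleted edge typically breaks one edge of $T_1$ and a \emph{different} edge of $T_2$, so the two induced forests need not form a common agreement forest and one ends up paying roughly $2r$. Peeling off one cycle at a time and charging each deletion to a single tree is what makes the count come out to $r+1$, but this forces the tree-plus-forest formulation and a fair amount of bookkeeping: trimming $N$ so that every cycle meets an $F_j$-image outside the $T_1$-image, verifying that the deleted edge splits exactly one $F_j$ along exactly one of its edges, that $T_1$ and the remaining $F_{j'}$'s survive (suppressing the two new degree-$2$ nodes must be consistent with their images), and that passing from the refined instance back to the original preserves all the agreement-forest axioms. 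Those verifications, together with the routine check that the network constructed in the easy direction is a legal binary unrooted phylogenetic network, are where the real work sits.
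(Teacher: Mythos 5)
Your proof is correct, but it diverges from the paper's in an instructive way, mainly in the lower bound. For the upper bound the difference is minor: you induct on single TBR moves, simulating each by a subdivide--subdivide--connect operation that costs exactly one reticulation, whereas the paper fixes a maximum agreement forest with $k+1$ components and wires them onto a copy of $T_1$ one at a time in a pendant elimination order with respect to $T_2$; both are inductions adding one reticulation per step, and you mention the agreement-forest variant yourself as an alternative. The real divergence is the inequality $d_{TBR}(T_1,T_2)\le h^u(T_1,T_2)$. You reject a ``one-shot'' spanning-tree argument on the grounds that each deleted edge cuts one edge of $T_1$ and a \emph{different} edge of $T_2$, forcing a cost of roughly $2r$, and you therefore build an induction on $r$ that peels off one cycle at a time, generalised to a statement about a tree together with a displayed forest. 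That induction does go through (modulo the bookkeeping you flag, e.g.\ parallel edges arising when suppressing the two degree-2 nodes), but the one-shot argument you dismiss is not the one the paper uses. The paper breaks the symmetry between the two trees: it greedily extends the image of $T_1$ to a spanning tree $S$ of $N$, so that $T_1$ is never cut at all, and lets $F$ be the set of edges of the image of $T_2$ lying outside $S$. Since $(V,E\setminus F)$ contains $S$ it is connected, whence $|F|\le |E|-|V|+1=k$; cutting $T_2$ at the (at most $|F|$) corresponding tree edges leaves every surviving piece of the image of $T_2$ inside $S$, which displays $T_1$, so the resulting partition of $X$ is an agreement forest with at most $k+1$ blocks. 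This yields the bound in a few lines and dispenses with the tree-plus-forest formulation, the trimming invariants, and the refinement check; your cycle-peeling version is sound but buys only a generality that the theorem does not need.
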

\begin{proof}
We first show  $h^u(T_1,T_2) \leq d_{TBR}(T_1,T_2)$. Let $d_{TBR}(T_1,T_2)=k$. Observe that if $k=0$ then $T_1 = T_2$, because $d_{TBR}$ is a metric, and if $T_1 = T_2$ then $h^{u}(T_1, T_2)=0$, so the claim holds. Hence, assume $k \geq 1$.

By the earlier discussed equivalence, $T_1$ and $T_2$ have an agreement forest with $k+1$ components $F = \{F_0, \ldots, F_{k}\}$. Our basic strategy is to start with a network
that trivially displays $T_1$  (specifically, $T_1$ itself) and then to ``wire together'' the components of $F$ such that an image of $T_2$ is progressively grown. Each such
wiring step involves subdividing two edges and introducing a new edge between the two subdivision nodes. This increases the number of nodes in the network by 2 and the number
of edges by 3, so it increases the reticulation number by 1. We will do this $k$ times, yielding the desired result.

Observe that for least one of the components, $F_{p}$ say, $T_2(F_p)$ will be pendant in $T_2$.  Let $F' = F \setminus \{F_p\}$, $X' = X \setminus F_{p}$, $T'_1 = T_1|X'$ and
$T'_2 = T_2|X'$. Let $\{u,v\}$ be the edge that, when deleted, detaches $T_2(F_{p})$ from the rest of the tree. Assume without loss of generality that $u$ lies on $T_2(F_{p})$
and $v$ lies on $T_2( X' )$. The nodes $u$ and $v$ thus lie on unique edges of $T_2 | F_p$ and $T_2 | X'$ (or taxa if $F_p$ and/or $X'$ are singleton sets); these can be viewed
as the wiring points where $F_p$ wants to connect to the rest of the tree. Next, observe that $F'$ is an agreement forest for $T'_1$ and $T'_2$, so it too has a pendant
component, and the process can thus be iterated. In this way we can impose an elimination ordering on the components of $F$. For the sake of brevity assume that the components
$F_0, F_1, \ldots, F_k$ are already ordered in this way.

Now, set $N_k := T_1$.  For each $F_i \in F$, fix the unique image of $F_i$ in $N_k$ (this allows us without ambiguity to refer to \emph{the} image of $F_i$ in the intermediate
networks we construct). For each $ 0 \leq j \leq k-1$, we construct $N_{j}$ from $N_{j+1}$ in the following way. Assume that by construction $N_{j+1}$ already contains an image
of $T_2 | (\cup_{j'  > j} F_{j'})$ and an image of $T_2|F_j$, and that these images are disjoint. (Clearly this is true for $j=k-1$, by the definition of agreement forest).
From the earlier argument we know the two wiring points at which $T_2 | F_j$ wishes to join with $T_2 | (\cup_{j'  > j} F_{j'})$. If $|F_j| \geq 2$ the wiring point within
$F_j$ will be an edge, otherwise it is a taxon, and an identical statement holds for  $|\cup_{j' >j} F_{j'}|$. Assume for now that both wiring points are edges, $e_1$ and $e_2$
respectively. The images of these edges will, in general, be paths in $N_{j+1}$. We subdivide any edge on the image of $e_1$, and any edge on the image of $e_2$, and connect
them by a new edge. If a wiring point is a taxon $x$ the only difference is that we subdivide the unique edge entering $x$ in $N_{j-1}$. At the end of this process, $N_0$
displays both $T_1$ and $T_2$. This completes the claim $h^{u}(T_1,T_2) \leq d_{TBR}(T_1,T_2)$.

To prove $h^{u}(T_1,T_2) \geq d_{TBR}(T_1,T_2)$, let $k = h^u(T_1,T_2)$ and let $N$ be an unrooted phylogenetic network with reticulation number $k$ that displays both $T_1$
and $T_2$. Fix an image $T'_1$ of $T_1$ inside $N$. If this image is not a spanning tree of $N$, greedily add edges to the image until it becomes one. (The edges added this way
will correspond to unlabelled degree 1 nodes that are repeatedly deleted when tidying up the image to obtain $T_1$). Now, fix an image $T'_2$ of $T_2$ inside $N$.  Let
$F\subseteq E(N)$ be those edges of $N$ that are \emph{only} in $T'_2$. Deleting in $T_2$ the edges that correspond to $F$ breaks $T_2$ up into a forest with at most $|F|+1$
components. In fact, by construction this will be an agreement forest. Hence, $d_{TBR}(T_1, T_2) \leq |F|$. What remains is to show that $|F| \leq h^{u}(T_1,T_2)$. Given that
$T'_1$ was a spanning tree of $N$, and none of the edges on this image are in $F$, the graph $(V, E-F)$ is connected, so $|E|-|F| \geq |V|-1$. Hence, $|F| \leq |E|-|V|+1 = k$.
\end{proof}

\noindent Note that the proof given above is constructive, in the following sense. Given an agreement forest $F$ with $k+1$ components, one can easily construct in polynomial
time an unrooted network $N$ with reticulation number $k$ that displays both the trees, and given an unrooted network $N$ with reticulation number $k$ (and images of $T_1$ and
$T_2$ in $N$) one can easily construct in polynomial time an agreement forest $F$ with $k+1$ components.

\begin{cor}
\textsc{UHN} is NP-hard, in APX, and FPT in parameter $h^{u}(T_1, T_2)$.
\end{cor}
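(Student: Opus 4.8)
The plan is to derive all three assertions from Theorem~\ref{thm:tbrIsMaf}, the Allen--Steel identity $d_{MAF}(T_1,T_2)=d_{TBR}(T_1,T_2)+1$ \cite{AllenSteel2001}, and the constructive remark following the proof of that theorem, by importing the (in)tractability results known for \textsc{TBR}-distance / {\maf} on two unrooted binary trees. For NP-hardness I would use that the decision problem ``given $T_1,T_2$ and a nonnegative integer $k$, is $d_{TBR}(T_1,T_2)\leq k$?'' is NP-complete (well known; see, e.g., \cite{AllenSteel2001} and references therein); by Theorem~\ref{thm:tbrIsMaf} this is literally the question ``is $h^u(T_1,T_2)\leq k$?'', so the decision version of {\uhn} on two trees is NP-complete and {\uhn} is NP-hard, the reduction being the identity map on instances.

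For membership in APX I would take any polynomial-time $c$-approximation for {\maf} on two unrooted binary trees, a constant $c\geq 1$ \cite{whidden2013fixed,chen2015approximating}; on input $(T_1,T_2)$ it returns an agreement forest $F$ with at most $c\cdot d_{MAF}(T_1,T_2)$ components. If $h^u(T_1,T_2)=0$ we recognise this in polynomial time by testing whether $T_1\cong T_2$ and output $N:=T_1$. Otherwise $d_{TBR}(T_1,T_2)\geq 1$, and the constructive direction of Theorem~\ref{thm:tbrIsMaf} turns $F$ in polynomial time into an unrooted network $N$ with $r(N)=|F|-1\leq c\cdot d_{MAF}(T_1,T_2)-1=c\cdot d_{TBR}(T_1,T_2)+(c-1)\leq(2c-1)\,d_{TBR}(T_1,T_2)$, the last inequality using $d_{TBR}(T_1,T_2)\geq 1$. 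Since always $r(N)\geq h^u(T_1,T_2)=d_{TBR}(T_1,T_2)$, this is a $(2c-1)$-approximation, so {\uhn} is in APX.

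For fixed-parameter tractability in $h^u$, I would note that by Theorem~\ref{thm:tbrIsMaf} and the Allen--Steel identity a parameter value $k$ for {\uhn} is the same as parameter value $k$ for \textsc{TBR}-distance and $k+1$ for {\maf}; hence any FPT algorithm for {\maf}/\textsc{TBR} \cite{whidden2013fixed,chen2015parameterized,bordewich2016fixed} decides ``$h^u(T_1,T_2)\leq k$?'' in time $O(f(k)\cdot\text{poly}(n))$ and, on YES-instances, outputs a maximum agreement forest from which the constructive part of Theorem~\ref{thm:tbrIsMaf} builds the certifying network (and the images of $T_1$ and $T_2$) in polynomial time. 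There is no real obstacle here; the only point demanding a little care is the additive ``$+1$'' relating $d_{MAF}$ to $d_{TBR}=h^u$ when carrying the approximation ratio across, and this is absorbed by handling $h^u=0$ separately, exactly as above.
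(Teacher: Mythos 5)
Your proposal is correct and takes essentially the same route as the paper: all three claims are imported from Theorem~\ref{thm:tbrIsMaf} together with the known hardness, approximation, and FPT results for $d_{TBR}$/{\maf}. The paper's proof is a one-line citation of these results (quoting a polynomial-time 3-approximation and an $O(3^k\cdot\text{poly}(n))$ FPT algorithm for $d_{TBR}$); your only addition is to spell out the bookkeeping for the additive ``$+1$'' between $d_{MAF}$ and $d_{TBR}$ when transferring the approximation ratio, which the paper elides by citing the ratio for $d_{TBR}$ directly.
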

\begin{proof}
Immediate from Theorem \ref{thm:tbrIsMaf} and the corresponding results for $d_{TBR}$. Hardness (and a linear-size kernel) were established in \cite{AllenSteel2001}. The
best-known approximation result for $d_{TBR}$ is currently a polynomial-time 3-approximation \cite{whidden2013fixed,whiddenWABI}. The best-known FPT result for $d_{TBR}$
is currently $O( 3^{k} \cdot \text{poly}(n))$ \cite{chen2015parameterized}.
\end{proof}

\section{Root-uncertain hybridization number (RUHN)}
\label{sec:ruhn}

In this section we turn our attention to the \textsc{Root Uncertain Hybridization Number} (\ruhn) problem. We remind the reader that in this problem  the input consists of a
set of \emph{unrooted} binary trees and we are asked to choose the root location of each tree, such that the hybridization number is minimized. In the first part of this
section we show that {\ruhn} is already NP-hard and APX-hard even when the input consists of two trees. On the other side, in the next subsection
we show that the problem is FPT in the hybridization number for any number of trees by providing a quadratic-sized kernel.  We conclude the section by discussing how an
exponential-time algorithm can be obtained for solving the kernel.

\subsection{Hardness}
\begin{lem}
\label{lem:similar}
Let $\mathcal{T} = \{ T_1$, $T_2 \}$ be an input to {\hn}. We can transform in polynomial time $T_1$ and $T_2$ into two unrooted binary phylogenetic trees $T^{*}_1$, $T^{*}_2$ such that,
\begin{equation}
\label{eq:plus1}
h^{ru}( T_1^{*}, T_2^{*} ) = h^{r}( T_1, T_2) + 1.
\end{equation}
\end{lem}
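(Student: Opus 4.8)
The plan is to build $T_1^*$ and $T_2^*$ from $T_1, T_2$ by attaching a carefully-designed gadget that forces the optimal RUHN rooting to behave ``almost'' like the given rootings of $T_1,T_2$, while paying exactly one extra reticulation for the freedom to re-root. The natural construction: pick a fresh taxon set and a gadget tree (e.g.\ a caterpillar on many new leaves) and graft it onto $T_1$ and onto $T_2$ in a way that ``pins down'' the root — concretely, I would create a new taxon, say $r$, together with a chain or subtree of new taxa, and attach this structure to each $T_i$ at the position corresponding to its root edge, so that in any good rooting of $T_i^*$ the root must lie inside the gadget. The gadget should be identical in $T_1^*$ and $T_2^*$ (as an unrooted tree fragment) so that it contributes nothing to the hybridization number on its own, and its attachment points should be the images of the two distinct roots of $T_1$ and $T_2$, forcing a single unavoidable reticulation to reconcile ``gadget sees root here'' versus ``gadget sees root there.''

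First I would recall, from the preliminaries, the inequality $h^{u}(\mathcal{T}^*) \le h^{ru}(\mathcal{T}^*) \le h^{r}(\mathcal{T})$ and the standard fact that once root \emph{locations} are fixed, $h^{ru}$ on those fixed rootings is exactly $h^r$ of the resulting rooted trees; so proving \eqref{eq:plus1} splits into an upper bound and a lower bound. For the upper bound $h^{ru}(T_1^*,T_2^*) \le h^r(T_1,T_2)+1$, I would exhibit an explicit rooting of each $T_i^*$ — essentially rooting inside the gadget — and an explicit rooted network: take an optimal rooted network $N$ for $(T_1,T_2)$ with $r(N)=h^r(T_1,T_2)$, and show how to extend it by the gadget plus one reticulation so that it displays the chosen rootings of $T_1^*$ and $T_2^*$. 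For the lower bound $h^{ru}(T_1^*,T_2^*) \ge h^r(T_1,T_2)+1$, I would argue that (i) \emph{any} choice of root edges in $T_1^*,T_2^*$ must, because of the gadget, effectively root the $T_i$-parts at the specified edges (or something equivalent up to the displaying relation), and (ii) any network displaying these rootings must pay at least $1$ for the gadget's conflicting attachment plus at least $h^r(T_1,T_2)$ for the embedded copies of the rooted $T_1,T_2$; restricting an optimal RUHN network to $X(T_1) \cup X(T_2)$ and deleting the gadget-induced reticulation should yield a rooted network for $(T_1,T_2)$, giving the bound.

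The main obstacle will be step (i) of the lower bound: controlling \emph{all} rootings of $T_1^*$ and $T_2^*$ simultaneously. It is not enough that ``the natural rooting'' costs $h^r+1$; I must show no clever alternative rooting of the unrooted trees can do better — in particular I must rule out rootings that place the root far from the intended location and thereby ``cheat'' by exploiting the freedom RUHN gives on \emph{both} trees at once. The gadget therefore needs enough internal structure (e.g.\ long common chains of fresh taxa, so that rerooting anywhere but the designated spot would itself create a large common-chain obstruction, or would force the two rooted gadgets to disagree and blow up the reticulation number well past $h^r+1$) that every rooting which avoids the $+1$ penalty is forced into a configuration costing strictly more. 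I expect to handle this by a case analysis on where the root edge falls relative to the gadget, using the already-proven lower-bound machinery for rooted hybridization number (and possibly the observation that displaying a rooted caterpillar of length $\ell$ in a rooted network of small reticulation number is impossible) to kill the ``bad'' cases.

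A secondary technical point is bookkeeping: verifying that the gadget, as an \emph{unrooted} fragment, is genuinely identical in $T_1^*$ and $T_2^*$ (so it contributes $0$ on its own) and that the taxon sets line up, and checking that the polynomial-time bound on the transformation is immediate (the gadget has size polynomial — indeed linear — in $|X|$). I would also want the construction to be robust enough that it yields \emph{APX}-hardness, not just NP-hardness, i.e.\ the ``$+1$'' is an additive shift that preserves gap amplification; since \textsc{HN} on two trees is APX-hard \cite{bordewich}, an additive-$1$ reduction with a matching soundness analysis transfers APX-hardness, so I would phrase the lower-bound argument with explicit constants rather than merely asymptotically.
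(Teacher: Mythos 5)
There is a genuine gap, and it sits exactly at your central design decision: you require the gadget to be \emph{identical} in $T_1^*$ and $T_2^*$ ``so that it contributes nothing to the hybridization number on its own,'' and you hope the $+1$ will come from the two attachment points being the images of the two roots. This cannot work. If the gadget is an identical unrooted fragment grafted onto each tree at its root position, then rooting both $T_1^*$ and $T_2^*$ at the same far position inside the gadget turns the gadget into common structure sitting \emph{above} the original roots of $T_1$ and $T_2$; such common structure is absorbed for free by any network displaying the two original rooted trees, so $h^{ru}(T_1^*,T_2^*)\le h^{r}(T_1,T_2)$ and the claimed equality (\ref{eq:plus1}) fails. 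The degenerate case makes this vivid: for $T_1=T_2$ your $T_1^*$ and $T_2^*$ would be isomorphic unrooted trees, giving $h^{ru}(T_1^*,T_2^*)=0$ where the lemma demands $1$. The two desiderata you state --- ``contributes nothing on its own'' and ``forces a single unavoidable reticulation'' --- are in direct tension, and your sketch never resolves it; the mechanism you gesture at (``rerooting anywhere but the designated spot would create a large common-chain obstruction'') does not exist, because common chains are shared structure and create no obstruction whatsoever.

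The paper resolves the tension by making the gadget deliberately \emph{non-isomorphic} between the two trees: it appends to each $T_i$, at its root, a long caterpillar $(c_0,\dots,c_n,d_0,\dots,d_n)$ on $2n+2$ fresh taxa, but with the $c$-half \emph{reversed} in $T_2^*$. This guarantees that every pair of rootings is penalized. Rooting $T_1^*$ inside its $T_1$-part (or any choice that makes an entire $c$- or $d$-half appear in opposite orders in the two rooted trees) already costs $n-1$, which is at least the trivial upper bound $h^{r}(T_1,T_2)+1$; and if both trees are rooted in their caterpillar parts, the $T_i$-parts contribute $h^{r}(T_1,T_2)$ while the permanently non-isomorphic caterpillars contribute at least $1$ more (via the cluster reduction). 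The matching upper bound is an explicit construction: a single ``root cycle'' added above an optimal {\hn} network. To salvage your argument you would need to replace ``identical gadget'' by something like these oppositely oriented caterpillar halves and redo the lower-bound case analysis accordingly; the rest of your outline (upper bound via an explicit rooting and extended network, polynomial size, suitability for an approximation-preserving reduction) then goes through along the lines you describe.
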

\begin{proof}
Let $X$ denote the taxa of $T_1$ and $T_2$ and let $n = |X|$. We will construct in polynomial time a pair of unrooted trees $T_1^{*}, T_2^{*}$ on $3|X|+2$ taxa such that
(\ref{eq:plus1}) holds.

To construct $T^{*}_1$, we start by taking an unrooted caterpillar $(c_{0}, c_{1}, ..., c_n, d_0, d_1, d_2, ..., d_{n})$ on $2n+2$ new taxa. Let $r_1$ be the root of $T_1$. To
complete $T_1^{*}$ we ignore all the directions on the arcs of $T_1$, and concatenate the caterpillar to $T_1$ by subdividing the unique edge entering $d_n$ with a new node
$u$, and connect $u$ to $r_1$. The construction of $T^{*}_2$ is analogous, except that the $c$-part of the caterpillar is reversed: $(c_n, c_{n-1}, ..., c_0, d_0, d_1, d_2...,
d_n)$. See Figure \ref{fig:solution} (left and centre) for an example when $n=5$.

\begin{figure}[H]\centering\begin{tikzpicture}[scale=0.6]
\draw[very thick] (0,1) -- ++(-1.5,-1.5) -- ++(3,0) -- ++(-1.5,1.5);
\draw[very thick, fill, radius=0.1] (0,1) circle;
\node at (0,0) {$T_1$};
\node at (0,-1.2) {$T^{*}_1$};
\foreach \n in {5,4,3,2,1,0} \draw[very thick, fill, radius=0.1] (0,6-\n) -- ++(0,1) circle -- ++(1,0) circle node[right] {$d_\n$};
\foreach \n in {1,2,3,4,5} \draw[very thick, fill, radius=0.1] (0,12-\n) -- ++(0,1) circle -- ++(1,0) circle node[right] {$c_\n$};
\draw[very thick, fill, radius=0.1] (0,12) -- ++(0,1) circle node[above] {$c_0$};
\draw[<-, very thick, fill, radius=0.1] (-0.1,10.5) -- ++(-1.4,0);
\draw[very thick, fill, radius=0.1] (-1.5,10.5) circle node[above] {\text{root}};

\draw[very thick] (6,1) -- ++(-1.5,-1.5) -- ++(3,0) -- ++(-1.5,1.5); \draw[very thick, fill, radius=0.1] (6,1) circle; \node at (6,0) {$T_2$}; \node at (6,-1.2) {$T^{*}_2$};
\foreach \n in {5,4,3,2,1,0} \draw[very thick, fill, radius=0.1] (6,6-\n) -- ++(0,1) circle -- ++(1,0) circle node[right] {$d_\n$}; \foreach \n in {0,1,2,3,4} \draw[very thick,
fill, radius=0.1] (6,7+\n) -- ++(0,1) circle -- ++(-1,0) circle node[left]  {$c_\n$}; \draw[very thick, fill, radius=0.1] (6,12) -- ++(0,1) circle node[above] {$c_5$};
\draw[->, very thick, fill, radius=0.1] (7.5,10.5) -- ++(-1.4,0); \draw[very thick, fill, radius=0.1] ( 7.5,10.5) circle node[above] {\text{root}};

\draw[very thick] (12,1) -- ++(-1.5,-1.5) -- ++(3,0) -- ++(-1.5,1.5); \draw[very thick, fill, radius=0.1] (12,1) circle; \node at (12,0) {$N$}; \node at (12,-1.2) {$N'$};
\foreach \n in {5,4,3,2,1,0} \draw[very thick, fill, radius=0.1] (12,6-\n) -- ++(0,1) circle -- ++(1,0) circle node[right] {$d_\n$}; \draw[very thick, fill, radius=0.1] (12,7)
-- ++(0,1) circle; \draw[very thick, fill, radius=0.1] (12,8) -- ++(-1,1) circle -- ++(-1,0) circle node[left]  {$c_0$}; \draw[very thick, fill, radius=0.1] (11,9) -- ++(
0,1.0) circle -- ++(-1,0) circle node[left]  {$c_1$}; \draw[very thick, fill, radius=0.1] (11,10) -- ++( 0,1.0) circle -- ++(-1,0) circle node[left]  {$c_2$}; \draw[very thick,
fill, radius=0.1] (12,8) -- ++( 1,1) circle -- ++( 1,0) circle node[right] {$c_5$}; \draw[very thick, fill, radius=0.1] (13,9) -- ++( 0,1.0) circle -- ++( 1,0) circle
node[right] {$c_4$}; \draw[very thick, fill, radius=0.1] (13,10) -- ++( 0,1.0) circle -- ++( 1,0) circle node[right] {$c_3$}; \draw[very thick, fill, radius=0.1] (11,11) --
++(1,1) circle -- ++(1,-1);

\end{tikzpicture}
\caption{\emph{An example of the transformation used in Lemma \ref{lem:similar} when $|X|=5$. Left and centre: the two unrooted binary trees $T^{*}_1$ and $T^{*}_2$ that are
used as input to {\ruhn}. These are obtained from the original rooted binary trees $T_1$ and $T_2$ on $X$ that are the input to the {\hn} problem. If these trees are rooted at
the specified points, then the rooted phylogenetic network $N'$ displays the two rootings, where $N$ is an optimal solution to the original $HN$ problem. (Although not shown
explicitly here, in the top part of $N'$ all arcs are oriented downwards.)}} \label{fig:solution}\end{figure}
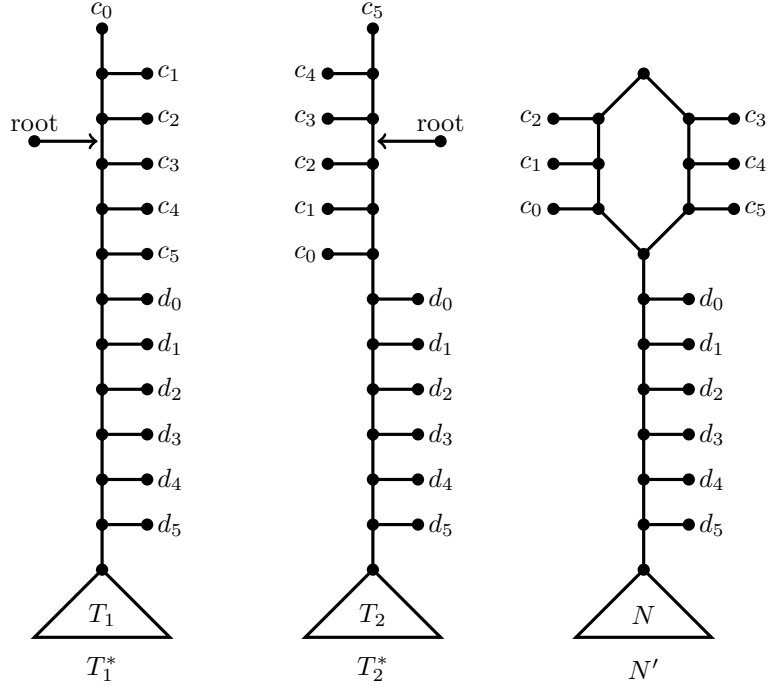

It is quite easy to show that  $h^{ru}( T_1^{*}, T_2^{*} ) \leq h^{r}( T_1, T_2) + 1$. Specifically, let $N$ be any optimum solution to the original {\hn} problem, i.e. $r(N) =
h^{r}(T_1, T_2)$. If we root both $T^{*}_1$ and $T^{*}_2$ on the internal edge between $c_{2}$ and $c_3$, then the network $N'$ as shown in Figure \ref{fig:solution} (right)
clearly displays these two rootings. Essentially, $N'$ has been obtained by  adding a single ``root cycle'' above $N$, so $r(N') = r(N)+1$. More formally, in order of
increasing distance from the root, the network $N'$ has taxa $c_2, c_1, c_0$ on one side of the root cycle, and $c_{3}, ..., c_{n-1}, c_{n}$ on the other.

The lower bound, $h^{ru}(T^{*}_1, T^{*}_2) \geq h^{r}(T_1, T_2) + 1$, requires slightly more effort to prove. We will use the following observation.
\begin{equation}
\label{eq:ub}
h^{ru}(T^{*}_1, T^{*}_2) \leq h^{r}(T_1, T_2) + 1 \leq (n-2) + 1 = n-1.
\end{equation}
The second inequality follows from the well-known fact that two rooted binary phylogenetic trees on $n \geq 2$ taxa can have hybridization number at most $n-2$ \cite{BaroniEtAl2005}.

Notice that, if in a rooting of $T^{*}_1$,  the whole $c$-part of the caterpillar appears in reverse order of the one in a rooting of $T^{*}_2$ then just this $c$-part of the
caterpillars adds $n-1$ to the hybridization number of that rooting. The same holds for the $d$-parts of the caterpillars. In both cases, using the observation above, the lower
bound is true. In particular, this implies that the lower bound holds if  $T^{*}_1$ is rooted inside its $T_1$-part, since any rooting of $T^{*}_2$ will create either
oppositely-oriented $c$- or $d$-parts of the caterpillars. The same holds for a rooting inside $T_2$. But clearly, if both $T^{*}_1$ and $T^{*}_2$ are rooted outside their
$T$-parts, then these $T$-parts add $h^{r}(T_1, T_2)$ to the hybridization number of such a rooting. Since the caterpillars of $T^{*}_1$ and $T^{*}_2$ are non-isomorphic, any
rooting within the $c$- or $d$-parts of caterpillars of the trees will additionally add at least 1 to the hybridization number. (Formally speaking this last argument is a
consequence of the \emph{cluster reduction} described in \cite{bordewich2}).
\end{proof}

\begin{thm}
\label{thm:apxhard}
\textsc{Root Uncertain Minimimum Hybridization} is NP-hard and APX-hard for $|\mathcal{T}|=2$.
\end{thm}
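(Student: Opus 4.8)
The plan is to obtain both hardness statements from Lemma~\ref{lem:similar}, which already gives a polynomial-time map $(T_1,T_2)\mapsto(T_1^{*},T_2^{*})$ with $h^{ru}(T_1^{*},T_2^{*}) = h^{r}(T_1,T_2)+1$. NP-hardness is then immediate: {\hn} on two trees is NP-hard \cite{bordewich}, and the decision question ``is $h^{r}(T_1,T_2)\le k$?'' is equivalent to ``is $h^{ru}(T_1^{*},T_2^{*})\le k+1$?'', so Lemma~\ref{lem:similar} is already a polynomial many-one reduction. (NP-hardness is also subsumed by the APX-hardness argument below.)

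For APX-hardness I would show that Lemma~\ref{lem:similar} is in fact an L-reduction from {\hn} on two trees --- which is APX-hard \cite{bordewich} --- to {\ruhn} on two trees. We may restrict to {\hn}-instances with $T_1\neq T_2$ as rooted trees, i.e.\ $h^{r}(T_1,T_2)\ge 1$: isomorphic instances are recognisable in polynomial time, and discarding them does not affect APX-hardness. Writing $n=|X|$ for the number of taxa of $T_1,T_2$ (as in Lemma~\ref{lem:similar}), the first L-reduction condition then holds with the constant $\alpha=2$, since $h^{ru}(T_1^{*},T_2^{*}) = h^{r}(T_1,T_2)+1\le 2\,h^{r}(T_1,T_2)$. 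It remains to build the solution-lifting map and check the second condition with a constant $\beta$.

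The solution map takes a feasible {\ruhn} solution for $(T_1^{*},T_2^{*})$ --- root edges $\rho_1,\rho_2$ together with a rooted network $N^{*}$ of reticulation number $c'$ displaying the two induced rooted trees --- and returns a rooted network $N$ displaying $T_1$ and $T_2$ with $r(N)\le c'-1$. If $c'\ge n-1$, I would simply output the standard construction of a network with $n-2$ reticulations displaying any two rooted binary trees on $n$ taxa \cite{BaroniEtAl2005}; its reticulation number $n-2\le c'-1$ is enough. If $c'\le n-2$, then by the lower-bound analysis in the proof of Lemma~\ref{lem:similar} neither $\rho_i$ can lie inside a $T$-part (any such rooting forces an oppositely-oriented caterpillar block and hence $\ge n-1$ reticulations), so both rootings lie in the caterpillar parts; consequently the $T_1$- and $T_2$-parts hang pendant, rooted exactly at $r_1$ and $r_2$, and $X$ is a common cluster of $(T_1^{*})^{\rho_1}$ and $(T_2^{*})^{\rho_2}$ with pendant subtrees $T_1$ and $T_2$. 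Applying a constructive form of the cluster reduction \cite{bordewich2} to $N^{*}$ then splits off a sub-network $N$ on $X$ displaying $T_1$ and $T_2$ with $r(N)\le r(N^{*})-1$, the ``$-1$'' coming from the fact that the portion of $N^{*}$ above the cluster must display two non-isomorphic contracted caterpillars and hence has at least one reticulation. In both branches $r(N)\le c'-1$, so $r(N)-h^{r}(T_1,T_2)\le c'-1-h^{r}(T_1,T_2)=c'-h^{ru}(T_1^{*},T_2^{*})$, which is the second L-reduction condition with $\beta=1$. Hence {\hn} on two trees L-reduces to {\ruhn} on two trees, and APX-hardness of the latter follows.

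The step I expect to be the main obstacle is making the cluster-reduction extraction genuinely constructive and correct on a \emph{fixed} network $N^{*}$: one has to locate, inside a given embedding, the node above which the images of $T_1$ and of $T_2$ hang, detach the ``caterpillar $+$ root-cycle'' part, and argue that this part already consumes at least one reticulation --- i.e.\ that the reticulation we save is genuinely forced by the opposite orientation of the caterpillars and cannot be shared with the part of $N^{*}$ that carries $T_1$ and $T_2$. Treating the various admissible positions of $\rho_1,\rho_2$ (in a $c$-block, in a $d$-block, at the intended $c_2$--$c_3$ edge) uniformly, in the spirit of the case analysis already carried out in Lemma~\ref{lem:similar}, is the bookkeeping-heavy part of this argument.
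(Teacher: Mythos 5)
Your proposal follows essentially the same route as the paper's appendix: an L-reduction from {\hn} on two trees to {\ruhn} built directly on Lemma~\ref{lem:similar}, with $\alpha=2$ (after discarding isomorphic instances so that $h^{r}\ge 1$) and $\beta=1$, a case split on whether the oracle's rooting is ``bad'' or ``sensible'', and in the sensible case the extraction of a network on $X$ that has strictly fewer reticulations than the oracle's network. One genuine (and valid) difference: the paper lengthens the $c$- and $d$-caterpillars to $2n+3$ taxa precisely so that a bad rooting forces an additive gap of at least $n$ on the {\ruhn} side; you instead keep the original construction and split on the threshold $c'\ge n-1$ versus $c'\le n-2$, returning the trivial $(n-2)$-reticulation network in the first branch. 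Since $n-2-h^{r}\le c'-1-h^{r}=c'-h^{ru}$ exactly when $c'\ge n-1$, and $c'\le n-2$ forces both roots into the caterpillar parts, this reorganisation works and avoids the modification of the gadget.

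The one step you defer --- showing that in the sensible case the extracted network on $X$ satisfies $r(N)\le r(N^{*})-1$ rather than merely $r(N)\le r(N^{*})$ --- is precisely the content of the paper's argument, and it is not done by invoking the cluster reduction on $N^{*}$ (which, as you suspect, is a statement about optima, not about decomposing a fixed network). Instead the paper restricts $N'$ to the union of the images of $T_1$ and $T_2$ (read off from the oracle's certificates) to get $N^{T_1,T_2}$ with $p\le q$ reticulations, and separately restricts $N'$ to the union of the minimal subtrees spanning the caterpillar taxa to get $N^{\mathrm{Cat}}$. Since the two caterpillars are non-isomorphic under every rooting, $N^{\mathrm{Cat}}$ contains a reticulation $v$; such a $v$ must lie on the images of \emph{both} rooted trees (otherwise at most one incoming edge of $v$ would survive the restriction and $v$ would be suppressed), yet $v$ cannot lie on the image of $T_1$, because $X$ is a pendant cluster below the image of $r_1$ and the minimal caterpillar-spanning subtree is disjoint from everything at or below that node. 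Hence $v$ is lost when forming $N^{T_1,T_2}$ and $p<q$ follows. If you flesh out your sketch along these lines --- arguing disjointness of the caterpillar-spanning part and the $X$-spanning part \emph{within each tree image}, rather than trying to physically detach a ``root cycle'' from $N^{*}$ --- your proof closes.
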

\begin{proof}
{\hn} is already known to be NP-hard and APX-hard for $|\mathcal{T}|=2$. NP-hardness of {\ruhn} is thus immediate from Lemma \ref{lem:similar}. We can also use this lemma to
prove APX-hardness, which excludes the existence of a PTAS for {\ruhn}, unless P=NP. The APX-hardness result might seem intuitively obvious, since the $+1$ term in
(\ref{eq:plus1}) is of vanishing significance as $h^{r}(T_1, T_2)$ grows. However, there are quite some technicalities involved in the extraction of a solution for {\hn} from a
solution for {\ruhn}.  In particular, additional combinatorial insight is required. We give a $(2,1)$ L-reduction from {\hn} to {\ruhn}. In fact, this can be extended to an
$(\alpha,1)$ L-reduction for each $1 < \alpha < 2$. To avoid disrupting the flow of the paper we have deferred the details of the L-reduction to the appendix.
\end{proof}

Note that one consequence of the L-reduction given in the proof of Theorem \ref{thm:apxhard}  is that if {\ruhn} has a constant-factor polynomial-time approximation algorithm
(i.e. is in APX), then so does {\hn}. In \cite{approximationHN} it is proven that, if {\hn} is in APX, so is \textsc{Directed Feedback Vertex Set}. Hence the following
corollary is obtained.

\begin{cor}
If {\ruhn} is in APX, then so is {\hn} and thus also \textsc{Directed Feedback Vertex Set}.
\end{cor}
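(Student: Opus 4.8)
The plan is to obtain the corollary purely by composing approximation-preserving reductions, with essentially no new combinatorial input. First I would recall the standard fact about $L$-reductions: if there is an $(\alpha,\beta)$ $L$-reduction from a minimization problem $A$ to a minimization problem $B$, then from any polynomial-time $c$-approximation algorithm for $B$ one can construct, in polynomial time, a $(1+\alpha\beta(c-1))$-approximation algorithm for $A$; in particular, if $B$ is in APX then $A$ is in APX. The proof of Theorem \ref{thm:apxhard} supplies exactly such a reduction, namely a $(2,1)$ $L$-reduction (indeed an $(\alpha,1)$ $L$-reduction for every $1<\alpha<2$) from {\hn} to {\ruhn}. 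Hence, instantiating $A={\hn}$ and $B={\ruhn}$, we conclude: if {\ruhn} is in APX, then so is {\hn}.

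Next I would invoke \cite{approximationHN}, which establishes a reduction of the same flavour from \textsc{Directed Feedback Vertex Set} to {\hn} and, in particular, shows that APX-membership of {\hn} entails APX-membership of \textsc{DFVS}. Chaining the two implications --- ``{\ruhn} in APX $\Rightarrow$ {\hn} in APX'' and ``{\hn} in APX $\Rightarrow$ \textsc{DFVS} in APX'' --- immediately yields the corollary by transitivity.

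The only step that is not entirely mechanical, and which I would make explicit rather than take for granted, is confirming that the transformation in the proof of Theorem \ref{thm:apxhard} genuinely satisfies the two defining conditions of an $L$-reduction: a linear bound relating $h^{ru}(T_1^{*},T_2^{*})$ and $h^{r}(T_1,T_2)$, and a procedure converting any feasible {\ruhn}-solution on the transformed instance into a feasible {\hn}-solution on the original whose cost excess is linearly controlled. The first follows directly from the exact identity $h^{ru}(T_1^{*},T_2^{*})=h^{r}(T_1,T_2)+1$ of Lemma \ref{lem:similar}; the second is precisely the solution-extraction argument carried out inside Theorem \ref{thm:apxhard}. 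So the main (and only mild) obstacle is bookkeeping --- checking that the additive ``$+1$'' and the extraction step inflate the multiplicative error by at most a constant factor --- and since this has effectively already been discharged inside Theorem \ref{thm:apxhard} and \cite{approximationHN}, the corollary reduces to a formality once those ingredients are in place.
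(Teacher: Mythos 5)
Your proposal is correct and follows exactly the paper's own route: the $(2,1)$ L-reduction from {\hn} to {\ruhn} established in the proof of Theorem \ref{thm:apxhard} gives ``{\ruhn} in APX $\Rightarrow$ {\hn} in APX'', and the result of \cite{approximationHN} supplies ``{\hn} in APX $\Rightarrow$ \textsc{DFVS} in APX'', so the corollary follows by chaining. Your extra remark about verifying the L-reduction conditions is sound but, as you note, already discharged in the appendix.
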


Determining whether \textsc{Directed Feedback Vertex Set} is in APX is a longstanding open problem in computer science; the corollary can thus be viewed
\emph{provisionally} as a strengthening of Theorem \ref{thm:apxhard}.

\subsection{Parameterized complexity of {\ruhn}}

In this subsection we will show that {\ruhn} is FPT when the parameter is $h^{ru}(\mathcal{T})$ (or, in other words, the size $k$ of the optimal solution for \ruhn). To prove
this, we will provide a kernel of quadratic size which (when combined with any exponential-time algorithm) will let us answer the question ``\textit{Is the optimal solution to \ruhn $\leq k$?}" in time $O(f(k) \cdot
\text{poly}(n))$ for some computable function $f$ that depends only on $k$.

For the kernelization proof we use the same ingredients introduced in Section \ref{subsec:utcfpt} and in particular the two reductions rules introduced there:  \textbf{Common
Pendant Subtree} (CPS) reduction and \textbf{Common $d$-Chain} ($d$-CC) reduction rules. We use them slightly differently from how they were defined there, because here the
input to each reduction rule is a set of unrooted binary trees, and within the common chain reduction we will take $d = 5k$ (i.e., long common chains will be truncated down to
length $5k$). In \cite{ierselLinz2013} the authors described how these two reduction rules can be used in the \textit{rooted} HN problem to reduce the initial instance
${\mathcal{T}}$ to a new kernelized instance of rooted binary phylogenetic trees $\mathcal{T}'$ on a set of leaf labels $X'$ such that $h^r(\mathcal{T}) \leq k \Leftrightarrow
h^r(\mathcal{T}') \leq k$ and, moreover, $|X'| = O(k^2)$. Here, we adapt their arguments to work for the unrooted case as well. Although this might seem a direct
generalization, additional technicalities must be addressed arising in root placement on the unrooted trees/networks.


We start by defining the concept of \emph{generator} \cite{kelk2014constructing} which will be used in the rest of the section: An \emph{$r$-reticulation generator} (for short,
$r$-generator) is defined to be a directed acyclic multigraph with a single node of indegree 0 and outdegree 1, precisely $r$ reticulation nodes (indegree 2 and outdegree at
most 1), and apart from that only nodes of indegree 1 and outdegree 2. The \emph{sides} of an $r$-generator are defined as the union of its edges (the edge sides) and its nodes
of indegree-2 and outdegree-0 (the node sides). Adding a set of labels $L$ to an edge side $(u, v)$ of an $r$-generator involves subdividing $(u, v)$ to a path of $|L|$
internal nodes and, for each such internal node $w$, adding a new leaf $w'$, an edge $(w, w')$, and labeling $w'$ with some taxon from $L$ (such that $L$ bijectively labels the
new leaves). On the other hand, adding a label $l$ to a node side $v$ consists of adding a new leaf $y$, an edge $(v, y)$ and labeling $y$ with $l$. In \cite{ierselLinz2013} it
was shown that if $G$ is an $r$-generator, then $G$ has at most $4r-1$ edge sides and at most $r$ node sides.

\begin{thm}
\label{thm:unrooted}
Let $\mathcal{T}$ be a collection of binary, unrooted, phylogenetic trees on a common set of leaf labels (taxa) $X$. Let $\mathcal{T}'$ be the set of
binary, unrooted, phylogenetic trees on $X'$ after we have applied the common pendant subtree (CPS) and the Common $5k$-Chain reduction rules, until no such rule can be
performed anymore.
Then $h^{ru}(\mathcal{T}) \leq k \Leftrightarrow h^{ru}(\mathcal{T'}) \leq k$ and, moreover, $|X'| = O(k^2)$.
\end{thm}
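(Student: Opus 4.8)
The plan is to establish the two implications of the biconditional separately, and then bound $|X'|$ by a counting argument based on the generator structure.

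For the soundness of the reduction rules, the key observation is that both the CPS and $d$-CC reduction rules should behave essentially the same way as in the rooted case analyzed in \cite{ierselLinz2013}, with the extra complication that we must track where roots may be placed. For the CPS reduction: if $T$ is a common pendant subtree of $\mathcal{T}$ with taxon set $X_T$, then I would argue that there is an optimal solution to {\ruhn} on $\mathcal{T}$ in which no input tree is rooted strictly inside its copy of $T$ — because rooting inside $T$ forces the network to contain a rooted copy of $T$ that is ``the wrong way round'' relative to the other trees, and one can reroot on the pendant edge attaching $T$ without increasing the reticulation number (this uses a cluster-reduction-type argument as in \cite{bordewich2}, analogous to the last paragraph of Lemma~\ref{lem:similar}). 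Given that, rootings of $\mathcal{T}$ biject naturally with rootings of the instance obtained by collapsing $T$ to a single leaf $x$, and the reticulation numbers agree. For the $5k$-Chain reduction, the argument is more delicate: I would show that if $C = (x_1,\dots,x_t)$ is a common chain with $t > 5k$, then in any solution with reticulation number $\le k$, at most $4k-1$ edge sides and $k$ node sides of the underlying generator of the displaying network can ``carry'' chain taxa, so a chain of length $> 5k$ must have a long stretch of consecutive taxa $x_i,\dots,x_j$ lying entirely along one edge side of the generator in the same order in every tree's image; truncating the chain to length $5k$ preserves displayability because this redundant stretch can always be reinserted. Conversely, truncating a chain never creates new reticulations, and a displaying network for the truncated instance can be expanded by reinserting the deleted chain taxa along the appropriate edge, so $h^{ru}(\mathcal{T}') \le k \Rightarrow h^{ru}(\mathcal{T}) \le k$ as well.

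For the size bound, suppose $h^{ru}(\mathcal{T}') \le k$, and let $N$ be an optimal rooted network displaying the chosen rootings $\mathcal{T}''$ of $\mathcal{T}'$, so $r(N) \le k$. Since no CPS reduction applies, $N$ has no pendant subtree on two or more taxa (otherwise that subtree would be common to all trees — here I would need the standard fact that a pendant subtree of $N$ common to all images is a common pendant subtree of the inputs), so every taxon of $X'$ sits on an edge side or node side of the generator underlying $N$. By the cited bound, this generator has at most $4k-1$ edge sides and at most $k$ node sides, hence at most $5k-1$ sides in total. Each node side carries at most one taxon. Each edge side carries a common chain of $\mathcal{T}'$, which by the $5k$-Chain reduction has length at most $5k$. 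Therefore $|X'| \le (4k-1)\cdot 5k + k = O(k^2)$. (If $h^{ru}(\mathcal{T}') > k$ there is nothing to prove about the size, and if $h^{ru}(\mathcal{T}) > k$ we already returned a trivial no-instance; the polynomial running time of the reductions is clear since each application removes at least one node or edge.)

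The main obstacle I expect is the $5k$-Chain soundness direction — specifically, showing that a common chain longer than $5k$ in the inputs forces, in every reticulation-$\le k$ displaying network, a long order-preserving stretch of the chain along a single generator edge side that is \emph{consistent across all the input trees' images simultaneously}. In the rooted case of \cite{ierselLinz2013} this follows from a pigeonhole argument over generator sides, but here we additionally have the freedom of root placement, so I would need to argue that the root can be placed (or reasoned about) without destroying this structure — essentially, that no ``good'' rooting places the root in the middle of the redundant stretch of the chain, which again reduces to a cluster/chain-reduction argument. A secondary subtlety is making precise the correspondence between rootings of the reduced instance and rootings of the original: one must check that the ``reroot off the pendant subtree / chain tail'' operation is simultaneously valid for all trees and compatible with a single displaying network, which I would handle by fixing an optimal network first and then adjusting each tree's rooting and image locally.
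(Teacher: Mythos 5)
Your overall architecture matches the paper's: prove soundness of the CPS rule, prove soundness of the $5k$-chain rule, then count taxa over the at most $4k-1$ edge sides and $k$ node sides of the underlying generator. The size bound at the end is essentially the paper's argument and is fine. For the CPS rule you take a slightly different route (normalise so that no tree is rooted strictly inside the common pendant subtree, then biject rootings); the paper instead constructs the reduced network directly from an optimal one, rerooting any tree whose root fell inside $T$ at the parent of the new leaf $x$ and deleting the taxa of $X_T$ one by one. Your normalisation claim is plausible but is not established by the ``wrong way round'' remark: if \emph{every} tree happens to be rooted at the same edge inside $T$, nothing is oppositely oriented, so you still owe an argument that rerooting all trees onto the pendant edge attaching $T$ cannot increase the reticulation number.

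The genuine gap is in the chain rule, where you have located the difficulty in the wrong direction. Truncating (original $\Rightarrow$ reduced) is the easy direction: delete the surplus leaves from the displaying network and tidy up; the only wrinkle, which the paper handles explicitly, is when the root of the network lies on the deleted part of the chain. The hard direction is reduced $\Rightarrow$ original: given a network $N_R$ with $r(N_R)\leq k$ displaying rootings of the truncated trees, the $5k$ surviving chain taxa may be scattered over many generator sides and need not occur there in chain order, so ``reinserting the deleted chain taxa along the appropriate edge'' is not well defined --- there need be no single edge next to $x_{5k}$ where appending $x_{5k+1},\dots,x_t$ yields images of all the original trees simultaneously. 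The paper's fix is a pigeonhole plus a consolidation step: since the generator has at most $5k-1$ sides and the truncated chain has $5k$ taxa, two chain taxa $x_i,x_j$ lie on a common edge side; one then moves \emph{all} chain taxa onto that side in the correct order before reinserting the deleted ones, with a case analysis ensuring $\{i,j\}\neq\{1,2\}$ (respectively $\{5k-1,5k\}$) so that the chain is not inverted when $x_1,x_2$ (respectively $x_{5k-1},x_{5k}$) share a parent in some tree. This is precisely what the constant $5k$ is tuned for. Separately, your intermediate claim that a chain of length $>5k$ forces a ``long stretch of consecutive taxa on one edge side'' is quantitatively wrong --- pigeonhole over $5k-1$ sides yields only two, not necessarily consecutive, taxa on a common side --- but that claim is in any case not needed for the direction to which you attach it.
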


We will start by showing that the CPS reduction rule leaves the hybridization number unchanged:

\begin{claim}
Let $\mathcal{T}$ be a set of unrooted binary trees with leaves labeled bijectively by $X$. Let $T$ be a maximal common pendant subtree of $\mathcal{T}$ and let $\mathcal{T}'$
be the set of all trees in $\mathcal{T}$ after the application of the Common Pendant Subtree Reduction rule to~$T$. Then $h^{ru}(\mathcal{T}) \leq k \Leftrightarrow
h^{ru}(\mathcal{T'}) \leq k$.
\end{claim}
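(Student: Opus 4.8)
The plan is to prove the two directions of the equivalence $h^{ru}(\mathcal{T}) \leq k \Leftrightarrow h^{ru}(\mathcal{T'}) \leq k$ separately, in both cases transferring a solution (a choice of root locations together with a displaying network) between the original instance and the reduced one. Throughout, write $T_x$ for the maximal common pendant subtree being clipped, $X_{T}$ for its taxon set, and $x$ for the fresh label that replaces it; recall that by the definition of common pendant subtree, not only is $\mathcal{T}_i|X_{T} = \mathcal{T}_j|X_{T}$ for all $i,j$, but also the rooted versions $(\mathcal{T}_i|X_{T})^{\rho}$ agree, where $\rho$ is the common point of attachment. This last fact is exactly what makes the reduction work in the rooted setting and I expect it to be the crux here too.

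For the direction $h^{ru}(\mathcal{T'}) \leq k \Rightarrow h^{ru}(\mathcal{T}) \leq k$: take an optimal RUHN solution for $\mathcal{T'}$, i.e. a rooting of each $T'_i \in \mathcal{T'}$ (none of which can be rooted ``inside'' $T_x$ since $T_x$ has been clipped) and a rooted network $N'$ with $r(N') \leq k$ displaying all the rootings. I would re-inflate: in each rooted tree, replace the leaf $x$ by a rooted copy of the common pendant subtree $T_x$, rooted as dictated by $(\mathcal{T}_i|X_T)^{\rho}$ — because all these rooted subtrees coincide, this is a well-defined fixed gadget $S$. Correspondingly, in $N'$ replace the leaf $x$ by the same rooted tree $S$. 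The resulting network $N$ satisfies $r(N) = r(N') \leq k$ because attaching a tree adds no reticulations, and it displays each re-inflated rooted tree: the image of $T'_i$ in $N'$ extends to an image of the re-inflated tree by using the copy of $S$ verbatim. Conversely, for $h^{ru}(\mathcal{T}) \leq k \Rightarrow h^{ru}(\mathcal{T'}) \leq k$: take an optimal solution for $\mathcal{T}$. The first subtlety to dispatch is that some tree $T_i$ might be rooted strictly inside the pendant subtree $T_x$; I would argue this can be assumed not to happen, or handled, by a rerooting/exchange argument — since $T_x$ is common and pendant with a common attachment edge, a root placed inside $T_x$ can be ``pushed out'' onto the attachment edge without increasing the reticulation number of the witnessing network (this uses that the portion of each input tree inside $T_x$ is identical, including orientation once rooted from the attachment side). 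Once all roots are outside $T_x$, clip $T_x$ down to the single leaf $x$ in every rooted input tree and, in the witnessing network $N$, locate the (unique, because $T_x$ is common) image of $T_x$, which must itself be a pendant subtree of $N$ hanging off a single edge; contract it to a leaf $x$. This yields $N''$ with $r(N'') \leq r(N) \leq k$ displaying all the clipped rootings.

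The main obstacle, as flagged, is the rerooting argument in the hard direction: showing that one may assume without loss of generality that no input tree is rooted inside the common pendant subtree $T_x$, and more delicately, that when we contract the image of $T_x$ in the witnessing network $N$ we really do get a pendant subtree that can be cleanly contracted without disturbing the images of the other input trees. I would handle this by first observing that since $T_x$ is a common \emph{pendant} subtree of all $T_i$ with a \emph{common rooted structure}, and $N$ displays each $T_i$, the image of $T_x$ inside $N$ — obtained by restricting an embedding of $T_i$ to $X_T$ — is forced to be the unique embedding of the fixed rooted tree $(\mathcal{T}_i|X_T)^\rho$; uniqueness here follows because a rooted binary tree embeds into a network in only one way up to the tidying-up operation. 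This image is pendant in $N$ off the edge carrying the image of the attachment edge, so contracting it is safe and simultaneously correct for every $T_i$. If some $T_j$ was rooted inside $T_x$, I would note that relocating that root to the attachment edge changes only the ``shape above $T_x$'' and, because $T_x$ appears identically (as a rooted subtree from the attachment side) in every tree, the modified rooting is still displayed by $N$ — formally this can be justified via the cluster reduction of \cite{bordewich2} exactly as in Lemma~\ref{lem:similar}, or by a direct local surgery on the network's root cycle. Assembling these pieces gives both inequalities and hence the claimed equivalence; the bound $|X'| = O(k^2)$ is not part of this claim and is deferred to the treatment of the chain reduction.
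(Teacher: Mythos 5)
Your backward direction (re-inflating the leaf $x$ into the common pendant subtree in both the trees and the network) matches the paper's argument and is fine. The forward direction, however, rests on two assertions that are false in general. You claim the image of $T_x$ inside the witnessing network $N$ is unique ``because a rooted binary tree embeds into a network in only one way up to the tidying-up operation'' --- this is not true (tree containment is NP-hard precisely because a tree can have exponentially many candidate images in a network), and you then claim this image ``must itself be a pendant subtree of $N$ hanging off a single edge.'' Neither uniqueness nor pendancy holds: the part of $N$ spanning $X_T$ may contain reticulation nodes and may be traversed by the images of other portions of the input trees, so ``contract the image of $T_x$ to a leaf'' is not a well-defined safe operation, and the inequality $r(N'')\leq r(N)$ does not follow from your argument. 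The paper sidesteps this entirely: it deletes the taxa of $X_T$ from $N$ one at a time, applying the standard cleaning-up operations (which only remove nodes and edges and hence can only decrease the reticulation number), and relabels the last surviving taxon as $x$; the resulting network displays each clipped tree simply because restriction to a subset of the taxa commutes with display, with no appeal to uniqueness or pendancy of any image.

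Your proposed rerooting step is a second gap. You want to first push any root lying inside $T_x$ out onto the attachment edge and assert that the \emph{same} network $N$ still displays the modified rooting, justified only by a vague appeal to the cluster reduction or ``local surgery on the root cycle''; this is exactly the delicate point and it is not established. The paper does not reroot the original trees at all: it keeps $N$ and the original rootings, performs the leaf deletions, and observes that the rooted tree which results from deleting $X_T\setminus\{x_t\}$ from a tree rooted inside $T_x$ is precisely a rooting of the clipped tree at the parent of the new taxon $x$ --- so the required rooting of each $T'\in\mathcal{T}'$ is \emph{produced by} the restriction process rather than needing to be justified against a fixed network beforehand. You would need to either adopt that route or supply a genuine proof of your rerooting and contraction claims; as written the forward direction does not go through.
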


\begin{proof}
Let $N$ be the optimal (with the minimum reticulation number) network that displays the optimally rooted version of $\mathcal{T}$ and let $N'$ be the optimal network that
displays the optimally rooted reduced instance $\mathcal{T}'$ (after a single application of the CPS reduction rule).

($\Leftarrow$) Let $h^{ru}(\mathcal{T}') = r(N') = k$.
From $N'$ we will construct a rooted network $N$ with $k$ reticulation nodes that displays $\mathcal{T}$. Since $N'$ displays $\mathcal{T}'$ which is a collection of trees with
leaves bijectively labeled from $\{X \setminus \{ X_T\}\} \cup \{x\}$ (where, as before, $X_T$ is the set of taxa of $T$), simply replace on $N'$ the leaf $x$ with the common
pendant subtree $T$. We have a new network $N''$ whose reticulation number obviously is $k$ (we did not add/create any new reticulation nodes). The leaves of $N''$ are labeled
from $X$ (without $x$). It remains only to show that $N''$ displays $\mathcal{T}$ which is immediate since $T$ displays itself. Observe that the root placement on each tree $T
\in \mathcal{T}'$ is irrelevant.

($\Rightarrow$) For the other direction, consider $\mathcal{T}$ and let $\mathcal{T}^{\rho}$ be a rooting of all trees such that $h^{r}(\mathcal{T}^{\rho})$ is minimized. Let
$N$ be the rooted network displaying the trees in $\mathcal{T}^{\rho}$ and let $\rho(T)$, for $T \in \mathcal{T}$ be the actual root of $T$ (given by $\mathcal{T}^{\rho}$).
Similar for $N$. Let $T$ be the CPS of each member of $\mathcal{T}$. From $N$ we need to construct a new network $N'$ with $k$ reticulation nodes that displays all the trees in
$\mathcal{T}'$. The problem will be: what if $\exists T \in \mathcal{T}$ such that its root is inside $T$? In such cases, the CPS reduction rule will cut-off the root of this
tree and this will ``force" us to root $T$ in another location unaffected by the CPS reduction rule, which will potentially change the hybridization number of the resulting
instance. Given a rooting of all members of $\mathcal{T}$ and $N$ with $h^{ru}(\mathcal{T}) = k$, consider the following rootings for each $T' \in \mathcal{T}'$:  if $\rho(T)
\in T$ then root $T'$ (after the clipping of $T$) on the parent of $x$ (the new taxon replacing $T$). Else, leave the rooting unchanged. Now, from $N$, we need to create a new
rooted network $N'$ that displays $\mathcal{T}'$ such that its reticulation number is (not greater than) $k$. Apply the standard procedure: let $X_T \subset X$ be the set of
leaves of the CPS $T$ and let $(x_1, x_2, \dots,  x_t)$ be some arbitrary but fixed ordering of them. Start with $x_1$, delete it and delete any reticulation node with
outdegree 0 and perform the standard cleaning-up operations\footnote{Deleting reticulation nodes with outdegree 0; suppressing nodes with indegree and outdegree both equal to
1; deleting leaves unlabelled by taxa; deleting nodes with indegree 0 and outdegree 1.} until the resulting network is a phylogenetic network. Repeat for $x_2$ and so on until
arriving at $x_t$ which is simply relabelled by the new taxon $x$. Let $N'$ be the resulting network. By construction, $N'$ displays all $T_i' \in \mathcal{T}'$ and $r(N') \leq
k$.
\end{proof}

Now to the common chain reduction rule:

\begin{claim}
Let $\mathcal{T}$ be a set of binary, unrooted trees on $X$ and let $\mathcal{T}'$ be the set of trees in $\mathcal{T}$ after a single application of the Common $5k$-Chain
reduction rule. Then, $h^{ru}(\mathcal{T}) \leq k \Leftrightarrow h^{ru}(\mathcal{T}') \leq k$.
\end{claim}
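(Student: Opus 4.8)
The plan is to prove the two implications separately, following the template used for the rooted problem in \cite{ierselLinz2013}, but paying attention to the fact that a common chain of the \emph{unrooted} trees in $\mathcal{T}$ need not remain a common pendant chain once the trees are rooted. Throughout, fix a maximal common chain $C=(x_1,\dots,x_t)$ of $\mathcal{T}$ with $t>5k$, and let $C'=(x_1,\dots,x_{5k})$ be its truncation, so the trees of $\mathcal{T}'$ are obtained from those of $\mathcal{T}$ by deleting the leaves $x_{5k+1},\dots,x_t$ and suppressing degree-2 nodes.

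($\Rightarrow$) This is the routine direction. If $h^{ru}(\mathcal{T})\le k$, fix a witnessing rooting $\mathcal{T}^{\rho}$ and a rooted network $N$ with $r(N)\le k$ displaying all of $\mathcal{T}^{\rho}$. Delete the surplus leaves $x_{5k+1},\dots,x_t$ from $N$ and from every rooted tree, together with the usual clean-up; deleting leaves cannot raise the reticulation number, so the resulting network $N'$ has $r(N')\le r(N)\le k$. The only point needing care is the root: if the root of some $T^{\rho}$ sat on a deleted edge — a pendant edge of $x_j$ with $j>5k$, or a chain edge $\{p_j,p_{j+1}\}$ with $j\ge 5k$ — then the entire deleted stretch of the chain collapses to a single edge $e^{*}$ at $p_{5k}$, and I re-root the corresponding truncated tree on $e^{*}$. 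A brief check that the cleaned-up restriction of the image of $T^{\rho}$ in $N$ is an image of this re-rooted truncated tree in $N'$ — the same local bookkeeping as in the preceding CPS claim — finishes this direction, giving $h^{ru}(\mathcal{T}')\le k$.

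($\Leftarrow$) This is the crux. Start from a rooting of $\mathcal{T}'$ and a rooted network $N'$ with $r(N')=k'\le k$ displaying it, fix an image of each tree, and pass to the $k'$-reticulation generator $G'$ of $N'$, which has at most $4k'-1$ edge sides and at most $k'$ node sides. The structural heart of the argument is that the image of any single tree restricted to $C'$ is a \emph{simple} path, hence passes through each of the $O(k)$ branch and reticulation nodes of $G'$ at most once; consequently the sequence of generator sides carrying the parents $p'_1,\dots,p'_{5k}$ changes only $O(k)$ times, and after discarding also the at most $k$ chain leaves sitting on (terminal) node sides, the choice of truncation length $5k$ is exactly what guarantees a sub-chain $(x_a,\dots,x_b)$ all of whose parents lie consecutively along one \emph{edge} side of $G'$ — joined by a reticulation-free path in $N'$ that, having no alternative reticulation-free route, is traversed in the same way by the image of every tree. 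Possibly after first normalizing $N'$ near $x_{5k}$, without changing $r(N')$, so that such a uniform reticulation-free stretch abuts the end of $C'$, I would then re-inflate: subdivide that edge with $t-5k$ fresh internal nodes carrying new pendant leaves, reinserting $x_{5k+1},\dots,x_t$ in chain order. This adds no reticulation, so the new network $N$ has $r(N)=k'\le k$. Finally, for each tree of $\mathcal{T}$ I re-root at the edge corresponding to the chosen rooting of its counterpart in $\mathcal{T}'$ — if that root lay inside $C'$ the surplus leaves fall consistently on the far side of it, so this is well defined — and extend its fixed image along the new stretch, re-routing the image onto the reticulation-free path first if necessary, exactly as in the delicate cases of the {\utc} kernel proof (Lemma~\ref{lem:utckernel}). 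Hence $N$ displays this rooting of $\mathcal{T}$, so $h^{ru}(\mathcal{T})\le k$.

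The main obstacle is the structural claim in the reverse direction: exhibiting a stretch of $C'$ that is simultaneously reticulation-free in $N'$, traversed identically by the images of \emph{all} (arbitrarily many) input trees, and positioned so that the surplus leaves can be reinserted into their correct chain positions while keeping every chosen rooting displayed. The generator bound together with simplicity of each image path give the first two properties, and it is precisely the freely chosen root — absent in the rooted setting of \cite{ierselLinz2013} — that forces the extra normalization step and the somewhat generous constant $5k$.
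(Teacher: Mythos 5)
Your ($\Rightarrow$) direction (clip the surplus leaves, re-root on the edge entering the parent of $x_{5k}$ if the old root sat on a deleted part of the chain) is exactly the paper's argument for that direction, and it is fine. The gap is in ($\Leftarrow$), and it sits precisely in the step you flag as ``possibly after first normalizing $N'$ near $x_{5k}$'': that normalization \emph{is} the proof, not a preprocessing footnote. Your plan is to locate some sub-chain $(x_a,\dots,x_b)$ whose parents lie consecutively on one edge side and subdivide there; but the surplus taxa $x_{5k+1},\dots,x_t$ must end up adjacent to $x_{5k}$ in every reconstructed tree, so a uniform reticulation-free stretch somewhere in the middle of $C'$ is useless unless you can first move the whole chain so that such a stretch abuts $x_{5k}$ --- and you give no mechanism for doing that. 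Note also that the pigeonhole ($5k$ leaves versus at most $5k-1$ generator sides) only guarantees \emph{two} chain leaves $x_i,x_j$ on a common side, not a consecutive run $(x_a,\dots,x_b)$, so even the structural claim you lean on is not established as stated.

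The paper's resolution is a global rearrangement rather than a local insertion: it picks a pair $x_i,x_j$ on a common edge side and then \emph{relocates all} chain taxa onto that side (those with smaller index above $x_i$, those with larger index below $x_j$), after which appending $x_{5k+1},\dots,x_t$ at the end is trivial. The second thing your proposal misses is why the anchor pair must be chosen to avoid $\{x_1,x_2\}$ (when $i<j$) and $\{x_{5k-1},x_{5k}\}$ (when $j<i$): if a tree roots so that $x_1$ and $x_2$ share a parent, and you anchor the rebuilt chain at $x_1$ above $x_2$, the relocation inverts the chain relative to that tree's required orientation and the new network no longer displays it. The paper's short argument that a usable alternative pair always exists (using maximality of the original chain and the fact that not all trees are identical) is the combinatorial content you would still need to supply; a pointer to the two-edge re-routing trick from the {\utc} kernel does not substitute for it, since here the fix must work simultaneously for arbitrarily many trees whose chains may be pendant at different ends.
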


\begin{proof}
For the first direction (from the reduced to the original instance) let $C = (x_1, x_2, \dots, x_t  )$ be a subset of the taxa $X$ that defines a maximal common chain of length
$> 5k$.
Suppose that, in $\mathcal{T}$, we have clipped $C$ down to a reduced chain $C_R = (x_1,\ldots ,x_{5k})$. Let $\mathcal{T}_R$ be the set of these clipped (or reduced) trees and
let $N_R$ be a network that displays some rooted version of $\mathcal{T}_R$ with $k$ reticulation nodes.
Since the generator has at most $5k-1$ sides, there must exist at least one side containing at least two leaves of the chain. Let~$x_i$ and~$x_j$ be two leaves of the chain
that are on the same side~$s$ of the generator, with~$x_i$ above $x_j$. Clearly, this side must be an edge side. We will consider the case that $i<j$. The case that $j<i$ can
be handled symmetrically.

First suppose that~$\{i,j\}\neq\{1,2\}$. Then, we move all the taxa of the chain on the appropriate location on the side $s$ of $x_i,x_j$ of the generator $G$. We take all taxa
$x_{\ell}$ such that $\ell > j$ and plug them after $x_j$ in $s$, by appropriately subdividing the unique edge exiting the parent of $x_j$. We do the opposite for all the taxa
$x_{\ell'}$ such that $\ell' < i$ i.e., plug them ``before" $x_i$ in $s$ by appropriately subdividing the unique edge entering the parent of $x_i$.

Now suppose that~$i=1$ and~$j=2$.  Then we take any other pair of leaves that are on the same side of the generator and go back to the previous case. To see that such leaves
exist, assume that $\{x_1,x_2\}$ is the only pair of leaves that are on the same side. If the trees in $\mathcal{T}_R$ are not all identical, then there exists at least one
leaf~$y$ that is not in the reduced chain~$C_R$. Since the generator has at most~$5k-1$ sides, and the chain has~$5k$ leaves, this implies that each side contains at least one
leaf of the chain. Let~$x_q$ be a leaf of the chain that is on the same side as~$y$. This is only possible when $q\in\{1,5k\}$. If~$q=1$ this implies that the original
chain~$C$ was not maximal and we obtain a contradiction. If~$q=5k$, then we can add~$y$ to~$C_R$ and obtain a longer common chain~$C_R'$. Repeating this argument, we eventually
obtain a contradiction or find out that all trees in $\mathcal{T}_R$ are identical (a case that can be handled trivially).

As mentioned before,  the case that $j<i$ can be handled symmetrically. In this case, we make sure that $\{i,j\}\neq\{5k-1,5k\}$.

\medskip
\noindent \textbf{Expanding Step:} We still need to expand the chain by introducing the ``missing" taxa (the ones that disappeared after the clipping of the chain).  Move all
these taxa $\{ x_{5k+1}, \dots, x_{t} \}$ to the side $s$ in such a way that either~$C$ or the reverse sequence becomes a chain in the network.
In that way, from $N_R$ on $X_R$ (the leaf label set without the clipped labels after an application of the $5k$-CC rule) we have created a new network $N$ on $X$ with the same
reticulation number as $N_R$. We still need to argue that $N$ displays some appropriately rooted version of $\mathcal{T}$.

Take any tree $T_R \in \mathcal{T}_R$. Perform all the previous operations (applied on $N_R$) on $T_R$. In other words, move appropriately all the corresponding taxa on the
same side of the root and re-introduce the ``missing'' taxa in such a way that either~$C$ or the reverse sequence becomes a chain in the tree. In this way, from the rooted
network $N_R$ on $X_R$ that displays a rooted version of the truncated trees in~$\mathcal{T}_R$, we construct a new rooted network $N$ on $X$ \textit{and} rooted versions
$\mathcal{T}^{\rho}$ of the trees $\mathcal{T}$.

We now argue that~$N$ displays $\mathcal{T}^{\rho}$.  Let~$T^\rho\in \mathcal{T}^{\rho}$, let~$T_R$ be the corresponding reduced tree in~$\mathcal{T}_R$ and let~$T_R^\rho$ be a
rooting of the reduced tree~$T_R$ that is displayed by~$N_R$.

If neither~$x_1$ and~$x_2$ nor~$x_{5k-1}$ and~$x_{5k}$  have a common parent in~$T_R^\rho$, then it is clear from the construction that~$N$ displays~$T^\rho$. If~$x_1$
and~$x_2$ have a common parent in~$T_R^\rho$, then it is possible that~$x_1$ and~$x_2$ are on the same side of the generator of~$N_R$ with~$x_1$ above~$x_2$. If we moved all
chain-taxa below~$x_2$ on this side then this would inverse the chain, which would be a problem. However, this does not happen since in this case~$i<j$ and in that case we made
sure that~$\{i,j\}\neq\{1,2\}$. Symmetrically, if~$x_{5k-1}$ and~$x_{5k}$ have a common parent in~$T_R^\rho$ and~$x_{5k}$ is above~$x_{5k-1}$ on some side of~$N_R$, then we do
not move all taxa to this side because then~$j<i$ and hence~$\{i,j\}\neq\{5k,5k-1\}$.


\medskip
For the other direction (from the original instance to the truncated), let $N$ be a rooted network that displays some rooted version of $\mathcal{T}$ with $k$
reticulation nodes. Let $\mathcal{T}_R$ be the set of trees from $\mathcal{T}$ after a single application of the common chain reduction rule. Then, from $N$, we
will show how to create a rooted network $N'$ that displays some appropriately rooted version of $\mathcal{T}'$. Let $N'$ be the network obtained from
$N$ as follows: let $C$ be a common chain on $\mathcal{T}$ of length greater than $5k$. Take $C$ on $N$ and ``clip" it i.e., delete all leaves $x_\ell$ with
indexes $\ell \geq 5k+1$, and apply the usual cleaning-up steps. If the root of $N$ happens to be on
the chain then take the single edge $e$ entering the parent of the last surviving taxon of the chain with index $x_{5k}$, subdivide it and introduce the new root location at
the new intermediate node that subdivides $e$. Do the same on all $T \in \mathcal{T}$. Thus, we have created a new rooted network $N'$ and a rooting for all trees
in $\mathcal{T}_R$, all on $X'$ (without the ``excess" taxa deleted from the common chain). Obviously, by construction, the reticulation number of $N'$ has not
increased.  It remains to show that $N'$ displays $\mathcal{T}_R$ which follows immediately since $N$ displays (a rooted version of) $\mathcal{T}$.
\end{proof}

These two claims show that successive applications of the CPS and $5k$-CC rules do not change the hybridization number of the resulting reduced instances. Assuming that we have
applied these two rules as often as possible, let $\mathcal{T}'$ be the resulting instance. From the previous analysis we know that $\exists N'$ such that $r(N') \leq k$. Since
each common chain of $\mathcal{T}'$ has length $\leq 5k$, we conclude that in $N'$ we cannot find a chain of length greater than $5k$ where all leaves are on the same side of
the underlying generator (otherwise it would constitute a common chain and it would be clipped). Thus, $N'$ has at most $5k-1$ taxa on each \textit{edge} side and, obviously,
at most one taxon on each \textit{node} side. Thus, the total number of taxa that $N'$ can have is at most

$$
(5k-1) \cdot \underbrace{(4k-1)}_\text{\# of edge sides} +  \underbrace{k}_\text{\# of node sides} < 20k^2.
$$


The above kernelization eventually terminates: at each step we either identify a common pendant subtree or a long common chain or, if none of these is possible, we terminate.
Each reduction step reduces the number of taxa by at least 1, so we eventually terminate in polynomial time.\\
\\
The kernel we have described can be used to give an FPT algorithm to answer the question, ``Is $h^{ru}(\mathcal{T}) \leq k$?''. Let $\mathcal{T'}$ be the kernelized set of
trees. If the cardinality of the set of leaves given in the above bound is violated, we know that the answer is NO. So, assume it is not violated. We simply guess by
brute-force the root location of each tree in $\mathcal{T'}$. Each collection of guesses yields a set of rooted binary phylogenetic trees $\mathcal{T''}$, and we ask ``Is
$h^{r}(\mathcal{T''}) \leq k$?'' Clearly, the answer to ``Is $h^{ru}(\mathcal{T}) \leq k$?'' is YES if and only if at least one of the ``Is $h^{r}(\mathcal{T''}) \leq k$?''
queries answers YES. The kernelization procedure ensures that each tree in $\mathcal{T'}$ has $O(k^2)$ taxa and thus also $O(k^2)$ edges. Hence, the overall running time is the
time for the kernelization procedure plus $[O(k^2)]^t$ calls to an algorithm for {\hn}, where $t = |\mathcal{T}|$. Noting that $t \leq 2^k$ (otherwise the answer is trivially
NO), and that {\hn} is FPT \cite{vanIersel20161075}, we obtain an overall running time of $O( \text{poly}(n) +  f(k) \cdot \text{poly}(n) ) = O(f(k) \cdot \text{poly}(n))$.

This concludes the proof of Theorem \ref{thm:unrooted}.



\section{Conclusions and open problems}
\label{sec:conclusions}

In this article we have studied two variations of the classical hybridization number ({\hn}) problem: the root-uncertain variant {\ruhn} and the unrooted variant {\uhn}. We have also studied the natural unrooted variant of the tree containment ({\tc}) decision problem, {\utc}.

As we have seen, both {\tc} and {\utc} are NP-complete and FPT in reticulation number. The natural open question here is whether our FPT algorithm for {\utc}, with running time $O( 4^k \cdot \text{poly}(n) )$,
can be improved to achieve a running time of $O( 2^k \cdot \text{poly}(n) )$, which is trivial for {\tc}. Also, the {\tc} literature has not yet considered pre-processing, so it would be interesting to adapt our kernelization strategy to the rooted context.

Regarding {\hn} and {\ruhn}, both are APX-hard. It is known that if {\hn} (respectively, {\ruhn}) is in APX then so too is \textsc{DFVS}. However, at present we do not have a
reduction from {\ruhn} to {\hn}, which means that (from an approximation perspective) {\hn} might be easier than {\ruhn}. This is an interesting question for future research:
it remains a possibility that both {\hn} and \textsc{DFVS} are in APX, but {\ruhn} is not.  Both {\hn} and {\ruhn} are FPT in hybridization number, via a quadratic kernel. For {\ruhn} a
pertinent question is whether, in the case of just \emph{two} input trees, the best known FPT running time for {\hn} can be matched, which is $O(3.18^{k} \cdot \text{poly}(n))$
\cite{whidden2013fixed}. This raises the question of whether, and in how far, the successful agreement forest abstraction can be adapted for {\ruhn}.

In terms of approximation the other variant of {\hn}, {\uhn}, differs quite strikingly from {\hn}, although we note that in this article we have only studied {\uhn} on two
trees. For two trees {\ruhn} is (due to its equivalence with  \textsc{TBR}) in APX, while it is still unknown whether {\hn} is in APX (see the above discussion). This gap in
approximability is similar to that which exists between \textsc{Maximum Acylic Agremeent Forest (MAAF)} and \textsc{Maximum Agreement Forest (MAF)} on two rooted trees
\cite{approximationHN}. This is not so surprising given that \textsc{MAAF} is essentially equivalent to {\hn}, and both the rooted and unrooted variants of \textsc{MAF} (which
are essentially equvalent to r\textsc{SPR} and \textsc{TBR} respectively) are firmly in APX.

Alongside the complexity discussions above it's tempting to ask which of the problems studied in this article can (in some formal sense) be ``reduced'' to each other. The
APX-hardness reduction already shows that {\hn} can be reduced to {\ruhn} in a highly approximation-preserving way. Can {\ruhn} be reduced to {\hn}? Can {\hn} be reduced to
{\uhn}? Can {\ruhn} be reduced to {\uhn}?

\section{Acknowledgments}
Leo van Iersel was partly funded by a Vidi grant from the Netherlands Organisation for Scientific Research (NWO) and by the 4TU Applied Mathematics Institute. The work of both
Olivier Boes and Georgios Stamoulis was supported by a NWO TOP 2 grant.

\bibliographystyle{plain}
\bibliography{bibliographyTOP}

\clearpage
\appendix

\section{Omitted APX-hardness proof}

Here we give the details of the omitted proof that {\ruhn} is APX-hard (Theorem \ref{thm:apxhard}).

\begin{proof}
To establish the result we provide
a linear (L) reduction \cite{lreduc}  from {\hn} to {\ruhn}. Formally, an L-reduction is defined as follows:

\begin{dfn}
Let $A,B$ be two optimization problems and $c_A$ and $c_B$ their respective cost functions. A pair of functions $f,g$, both computable in polynomial time, constitute an
L-reduction from $A$ to $B$  if the following conditions are true:

\begin{enumerate}
\item $x$ is any instance of $A$ $\Rightarrow$ $f(x)$ is an instance to $B$,
\item $y$ is a feasible solution of $f(x)$ $\Rightarrow$ $g(y)$ is a feasible solution of $x$,
\item $\exists \alpha \geq 0$ such that $OPT_B(f(x)) \leq \alpha OPT_A(x)$,
\item $\exists \beta \geq 0$ such that for every feasible solution $y'$ for $f(x)$ we have $|OPT_A(x) - c_A(g(y'))| \leq \beta |OPT_B(f(x)) - c_B(y')|$
\end{enumerate}
where $OPT_{A}$ is the optimal solution value of problem $A$ and similarly for $B$.
\end{dfn}

{\hn} is already known to be NP-hard and APX-hard for $|\mathcal{T}|=2$. We will give a $(\alpha, \beta)=(2,1)$ L-reduction from {\hn} to {\ruhn}.

Let $(T_1, T_2)$ be the two trees that are given as input to {\hn}, and let $X$ be their set of taxa, where $n=|X|$. Let $(T^{*}_1, T^{*}_2)$ be the unrooted trees constructed
in the proof of Lemma \ref{lem:similar}, except that here we make the ``$c$'' and ``$d$'' caterpillars slightly longer: length $(2n+3)$ instead of length $(n+1)$. It is easy to
check that with these longer caterpillars (\ref{eq:plus1}) still holds.

To avoid technical complications (the approximation oracle outputting an exponentially large answer) it is helpful in this proof to assume that, when given two unrooted binary
trees as input, each on $n'$ taxa, an approximation oracle for {\ruhn} never returns a network $N$ such that $r(N) > n'$. This is reasonable because one can always root the two
trees in arbitrary places and construct a trivial network with $n'$ reticulation nodes that displays both rootings, simply by ``merging'' the two rooted trees at their taxa and
adding a new root.

The reduction proceeds as follows. First we check whether $h^{r}( T_1,T_2 ) = 0$. This can easily be performed in polynomial time since this holds if and only if $T_1$ and
$T_2$ are isomorphic. If the equality holds, then there is no need to call the {\ruhn} approximation oracle: simply return $T_1$ as an optimal solution to {\hn} (i.e. $T_1$ is
a rooted phylogenetic network that trivially displays both $T_1$ and $T_2$). Otherwise, we know $h^{r}(T_1, T_2) \geq 1$. Given that (as shown in the previous proof)
$h^{ru}(T^{*}_1,T^{*}_2 ) = h^{r}(T_1, T_2) + 1$ it then follows immediately that $h^{ru}(T^{*}_1,T^{*}_2 ) \leq \alpha \cdot h^{r}(T_1, T_2)$, thus satisfying the ``forward
mapping'' part of the L-reduction (i.e. point 3 of the definition).

We now consider the ``back mapping'' part, i.e. point 4 of the definition. Consider the solution returned by the approximation oracle for {\ruhn}. We distinguish two cases.
First, suppose the solution given by the oracle roots $T^{*}_1$ and $T^{*}_2$ in a ``stupid'' way i.e. such that two oppositely oriented rooted caterpillars are induced. If
this happens, the network given by the oracle will have reticulation number at least $2n+1$ (i.e. the length of the caterpillar minus two). We know $h^{ru}(T^{*}_1,T^{*}_2 ) =
h^{r}( T_1,T_2 ) + 1$  so, trivially, $h^{ru}(T^{*}_1,T^{*}_2 ) \leq n+1$. Hence, the additive gap between the quality of the approximate and exact solution to
$\ruhn(T^{*}_1,T^{*}_2)$ is at least $(2n+1)-(n+1) = n$. To complete the L-reduction in this case, we simply discard the output of the approximation oracle, and return a
trivial solution to $\hn$ with $n$ reticulations (e.g. by merging $T_1$ and $T_2$ at their taxa and adding a new root). Given that $h^{r}(T_1,T_2 ) \geq 1$, the additive gap
between our approximate solution to $\hn$ and its true optimum is at most $n-1$. Clearly, $n-1 \leq \beta \cdot n$, as required by the ``back mapping'' part of the L-reduction.

The other case is if the the approximate solution roots the trees in a ``sensible'' way. That is, without loss of generality, $T^{*}_1$ is rooted somewhere inside its $c$-part and $T^{*}_2$ is
rooted somewhere above its $T_2$ part (i.e. somewhere in its $c$-part or $d$-part). This case is somewhat more subtle. Suppose the network $N'$ returned by the approximation oracle for {\ruhn} has reticulation number $q$.

We restrict $N'$ to only those nodes and edges necessary to display $T_1$ and $T_2$. The images of $T_1$ and $T_2$ inside $N'$ can easily be found in polynomial time because
(a) images of the rootings of $T^{*}_1$ and $T^{*}_2$ within $N'$ will be returned by the oracle as certificates, and (b) due to the location of the chosen roots, images of
$T_1$ and $T_2$ within $N'$ can be directly extracted from the images of the rootings of $T^{*}_1$ and $T^{*}_2$.
(Note that if $N^{T_1, T_2}$, which simply denotes the union of the two images, does not comply with the degree restrictions of a rooted binary phylogenetic network, it is easy
to modify it so that it does comply with these restrictions, without raising its reticulation number). $N^{T_1, T_2}$ displays $T_1$ and $T_2$, and has $X$ as its set of taxa:
this will be the solution we return for {\hn}.

Suppose that $N^{T_1, T_2}$ has reticulation number $p$. Given that the restriction operation does not create new reticulation nodes (but possibly deletes them), $p \leq q$. To make the reduction go through, we need:
 \[
|p - h^{r}(T_1,T_2)| \leq \beta | q - h^r(T^{*}_1, T^{*}_2) |
 \]
which is equivalent to,
\[
|p - h^{r}(T_1, T_2)| \leq | q - h^{r}(T_1, T_2) - 1 |
\]

 If $p \leq (q-1)$ this is immediate. However, what if $p=q$? This occurs if $N^{T_1, T_2}$ has just as many reticulations as $N'$ i.e. the restriction process did not delete any reticulation nodes. However, this cannot
happen, for the following reason. Consider again the image of the rooting of $T^{*}_1$ in $N'$. Extract all nodes and edges on the image that minimally span the taxa $\{c_0,
..., c_{2n+2}, d_0, ..., d_{2n+2}\}$.  Do this also for the image of the rooting of $T^{*}_2$ in $N'$. Let $N^{Cat}$ be the network obtained from $N'$ by restricting to the
union of these nodes and edges.  $N^{Cat}$ is a rooted binary phylogenetic network on $\{c_0, ..., c_{2n+2}, d_0, ..., d_{2n+2}\}$ that displays the caterpillars induced by the
rootings of $T^{*}_1$ and $T^{*}_2$. Crucially, $r(N^{Cat}) > 0$. This is because, as observed earlier, the caterpillar parts of $T^{*}_1$ and $T^{*}_2$ remain non-isomorphic
wherever you place the root exactly. In turn this means that $N^{Cat}$ must contain at least one reticulation node $v$. This reticulation node $v$ necessarily lies on the
images of both $T^{*}_1$ and $T^{*}_2$, since otherwise it would not have survived to be a reticulation node in $N^{Cat}$ (i.e. at most one of its two incoming edges would have
been extracted, and hence it would have been suppressed). However, $v$ does not lie on the part of the image of $T^{*}_1$ that contributed to $N^{T_1, T_2}$, since $v$ is not
part of the image of $T_1$. Hence, $v$ lies on (at most) the image of $T_2$, meaning that $v$ is a reticulation node that does not survive when $N^{T_1, T_2}$ is created
(because at most one of its two incoming edges is extracted), and hence $p < q$.
\end{proof}

The above proof can, if desired, be strengthened to an $(\alpha, 1)$ L-reduction for any constant $1 < \alpha < 2$. To achieve this, we do not begin by checking whether $h(T_1,
T_2)=0$, but instead whether $h(T_1, T_2) < \frac{1}{\alpha-1}$, and if so we simply return an optimal solution. For constant $\alpha$ this can be performed in polynomial time,
because computation of {\hn} is FPT in $h^{r}(T_1, T_2)$.

\end{document}